\newtheorem{theorem}{Theorem}
\newtheorem{lemma}[theorem]{Lemma}
\newtheorem{definition}[theorem]{Definition}
\tikzset{cross/.style={cross out, draw=black, fill=none, minimum size=2*(#1-\pgflinewidth), inner sep=0pt, outer sep=0pt}, cross/.default={3pt}}
\tikzset{>=spaced stealth'}
\newcommand{\bin}{\mathrm{bin}}
\title{On Integer Programming, Discrepancy, and Convolution\footnote{
This research was supported by German Research Foundation (DFG)
projects JA 612/20-1 and JA 612/16-1.}}
\author{Klaus Jansen \and Lars Rohwedder}
\begin{document}

\maketitle

\begin{abstract}
Integer programs with m constraints are solvable in pseudo-polynomial time in $\Delta$, the largest coefficient in a
constraint, when m is a fixed constant. We give a new algorithm with a running time of $O(\sqrt{m}\Delta)^{2m} + O(nm)$,
which improves on the state-of-the-art. Moreover, we show that improving on our algorithm for any $m$ is
equivalent to improving over the quadratic time algorithm for $(\min,~+)$-convolution. This is a strong evidence that our algorithm’s running time is the best possible. We also present a specialized algorithm with running time $O(\sqrt{m} \Delta)^{(1 + o(1))m} + O(nm)$ for testing feasibility of an integer program and also give a tight lower bound, which is based on the SETH in this case.
\end{abstract}%

\section{Introduction}
  Vectors $v^{(1)},\dotsc,v^{(n)}\in\mathbb R^m$
  that sum up to $0$ can be seen as a circle in $\mathbb R^m$ that walks from $0$ to $v^{(1)}$ to $v^{(1)} + v^{(2)}$,
  etc. until it reaches $v^{(1)} + \dotsc + v^{(n)} = 0$ again. The Steinitz Lemma~\cite{SteinitzLemma} says
  that if each of the vectors is small with respect to some norm, we can reorder them
  in a way that each point in the circle is not far away from $0$ with respect to the same norm.

  Recently Eisenbrand and Weismantel found a beautiful application of this lemma
  in the area of integer programming~\cite{DBLP:journals/talg/EisenbrandW20}.
  They looked at ILPs in standard form 
  \begin{equation}\label{eq:standard}
      \max\{c^T x : A x = b, x\in \mathbb Z^n_{\ge 0}\},
  \end{equation}
  where $A\in \mathbb Z^{m\times n}, b\in \mathbb Z^m$ and $c\in \mathbb Z^n$
  and obtained a pseudo-polynomial algorithm in $\Delta$, the biggest
  absolute value of an entry in $A$, when $m$ is treated as a constant.
  The running time they achieve is $n\cdot O(m\Delta)^{2m}\cdot \lVert b \rVert_1^2$
  for finding the optimal solution
  and $n\cdot O(m\Delta)^{m}\cdot \lVert b \rVert_1$ for finding only a feasible solution.
  This improves on a classic algorithm by Papadimitriou~\cite{DBLP:journals/jacm/Papadimitriou81}, which has a running time
  of
  \begin{equation*}
    O(n^{2m+2}\cdot (m \cdot \max\{\Delta, \lVert b \rVert_\infty\})^{(m+1)(2m + 1)}) .
  \end{equation*}
  The central idea in~\cite{DBLP:journals/talg/EisenbrandW20} is that a solution $x^*$ for the ILP above can be viewed as a walk in $\mathbb Z^m$
  starting at $0$ and ending at $b$. Every step is a column of the matrix $A$:
  For every $i\in\{1,\dotsc,n\}$ we step $x^*_i$ times in the direction of $A_i$ (see left picture in Figure~\ref{fig-steinitz}).
  By applying the Steinitz Lemma they show that there is an ordering of these steps such
  that the walk never strays off far from the direct line between $0$ and $b$ (see right picture in Figure~\ref{fig-steinitz}).
  They construct a directed graph with one vertex for every integer point near the line between $0$ and $b$
  and create an edge from $u$ to $v$, if $v - u$ is a column in $A$. The weight of the edge is
  the same as the $c$-value of the column. An optimal solution to the ILP can now be obtained by
  finding a longest path from $0$ to $b$. This can be done in the mentioned time,
  if one is careful with cycles.

  In this work we present a different algorithm for the same problem. In our approach we do not
  reduce to a longest path problem, but rather solve the ILP in a divide and conquer fashion. We
  use the (weaker) assumption that a walk from $0$ to $b$ visits a vector $b'$
  near $b/2$ at some point. The distance of this point to $b/2$ is closely related to the discrepancy
  of the matrix $A$, see Lemma~\ref{lem-split} for more details. A natural approach is to guess the vector $b'$
  and solve the problem with $Ax = b'$ and $Ax = b - b'$ independently.
  Both results can be merged to a solution for $Ax = b$. 
  In the subproblems the norm of $b$ and the norm of the solution are roughly divided in half.
  We use this idea in a dynamic program and speed up the process of merging solutions using algorithms
  for convolution. This approach leads to better running times for both the problem of finding optimal
solutions and for finding any feasible solution. We complement our study by giving (almost) tight
conditional lower bounds on the running time in which such ILPs can be solved. Finally, we discuss
some applications to Knapsack, Change Making, and Scheduling problems.

We proceed by giving a detailed outline of the results.

   \begin{figure}
\centering
\begin{tikzpicture}
  \draw (0, -0.5) rectangle (7, 5.5);
  \node[name=b, cross] at (6, 4.5) {};
  \draw (b) node[xshift=8, yshift=4] {$b$};
  \node[name=z, cross] at (1, 0.5) {};
  \draw (z) node[xshift=8, yshift=4] {$0$};
  \draw[thick, ->] (z) -- (0.6, 1.5);%
  \draw[thick, ->] (0.6, 1.5) -- (0.6, 2);%
  \draw[thick, ->] (0.6, 2) -- (0.8, 2.6);
  \draw[thick, ->] (0.8, 2.6) -- (1.4, 3);%
  \draw[thick, ->] (1.4, 3) -- (1, 3.5);%
  \draw[thick, ->] (1, 3.5) -- (1, 4.5);%
  \draw[thick, ->] (1, 4.5) -- (0.4, 3.9);%
  \draw[thick, ->] (0.4, 3.9) -- (1.2, 4);%
  \draw[thick, ->] (1.2, 4) -- (1.8, 4);%
  \draw[thick, ->] (1.8, 4) -- (2.5, 3.7);
  \draw[thick, ->] (2.5, 3.7) -- (3.2, 4.3);%
  \draw[thick, ->] (3.2, 4.3) -- (4.2, 4.3);%
  \draw[thick, ->] (4.2, 4.3) -- (4.8, 4);%
  \draw[thick, ->] (4.8, 4) -- (5.3, 3.5);%
  \draw[thick, ->] (5.3, 3.5) -- (5.7, 3.8);%
  \draw[thick, ->] (5.7, 3.8) -- (b);%
\end{tikzpicture}
\begin{tikzpicture}
  \draw (0, -0.5) rectangle (7, 5.5);
  \node[name=b, cross] at (6, 4.5) {};
  \draw (b) node[xshift=8, yshift=4] {$b$};
  \node[name=b2, cross] at (3.5, 2.5) {};
  \draw (b2) node[xshift=-9, yshift=4] {$\frac b 2$};
  \draw[thick, dashed] (2.8, 1.8) rectangle (4.2, 3.2);
  \node[name=z, cross] at (1, 0.5) {};
  \draw (z) node[xshift=8, yshift=-4] {$0$};
  \draw[thick, ->] (z) -- (1.6, 0.9);
  \draw[thick, ->] (1.6, 0.9) -- (2.3, 1.5);
  \draw[thick, ->] (2.3, 1.5) -- (2.3, 2);
  \draw[thick, ->] (2.3, 2) -- (2.8, 1.5);
  \draw[thick, ->] (2.8, 1.5) -- (3.1, 2.2);
  \draw[thick, ->] (3.1, 2.2) -- (3.9, 2.3);
  \draw[thick, ->] (3.9, 2.3) -- (3.3, 1.7);
  \draw[thick, ->] (3.3, 1.7) -- (3.3, 2.7);
  \draw[thick, ->] (3.3, 2.7) -- (3.7, 3);
  \draw[thick, ->] (3.7, 3) -- (4.3, 2.7);
  \draw[thick, ->] (4.3, 2.7) -- (3.9, 3.2);
  \draw[thick, ->] (3.9, 3.2) -- (4.5, 3.2);
  \draw[thick, ->] (4.5, 3.2) -- (4.7, 3.8);
  \draw[thick, ->] (4.7, 3.8) -- (5.7, 3.8);
  \draw[thick, ->] (5.7, 3.8) -- (5.3, 4.8);
  \draw[thick, ->] (5.3, 4.8) -- (b);

  \draw[thick, dashed] (0.3, -0.2) -- (0.3, 1.2) -- (5.3, 5.2)
     -- (6.7, 5.2) -- (6.7, 3.8) -- (1.7, -0.2) -- (0.3, -0.2);
\end{tikzpicture}
  \caption{Steinitz Lemma in Integer Programming}
  \label{fig-steinitz}
   \end{figure}

   \paragraph{Optimal solutions for ILPs.}
   We show that a solution to (\ref{eq:standard}) can be found in time
   \begin{equation}\label{eq:opt}
     O(\sqrt m \Delta)^{2m} + O(nm) .
   \end{equation}
   We note that throughout the article we work with the assumption that arithmetics on the input
numbers require constant time.
   Comparing to the state-of-the-art, we remove the dependence on $b$ from the running time
   and save a factor of $n$ without increasing the dependence on $\Delta$ and even mildly improving
   the dependence on $m$.
   The running time can be improved if there exists a truly sub-quadratic algorithm for (min,~+)-convolution (see Section~\ref{sec-convolution} for details on the problem).
   However, it has been conjectured that no such algorithm exists and this conjecture is
   the base of several lower bounds in fine-grained complexity~\cite{DBLP:conf/icalp/CyganMWW17, DBLP:conf/icalp/KunnemannPS17, DBLP:conf/isit/LaberRC14, DBLP:conf/soda/BackursIS17}.
   We show that for every $m$ the running time above is essentially the best possible unless
   the (min,~+)-convolution conjecture is false.
   More formally, for every $m$ there exists no algorithm that solves
   ILP in time
   $f(m) \cdot (n^{2-\delta} + (\Delta + \lVert b \rVert_\infty)^{2m - \delta})$,
   where  $\delta > 0$ and $f$ is an arbitrary computable,
   unless there exists a truly sub-quadratic algorithm for (min,~+)-convolution.
   Indeed, this means there is an equivalence between improving algorithms for (min,~+)-convolution
   and for ILPs with fixed number of constraints.
   It may be surprising that the lower bound has a dependence on $\Delta + \lVert b \rVert_\infty$
   and the upper bound only on $\Delta$. This implies that hardness cannot come from only letting $b$ grow
   and, in particular, it rules out improvements by adding a dependence on $\lVert b \rVert_\infty$.
  Our lower bound does leave 
  open some other trade-offs between $n$ and $O(\sqrt{m}\Delta)^m$
  such as $n \cdot O(\sqrt{m} \Delta)^m$, which would be
  an interesting improvement for sparse instances, i.e., when $n \ll (2 \Delta + 1)^m$.
  Such an improvement has recently been made for Unbounded Knapsack~\cite{DBLP:journals/jcss/ChanH22}, a notable special case of $m = 1$, see also Definition~\ref{def:unbounded-knapsack}.
  A running time of $n^{f(m)} \cdot (\Delta + \lVert b \rVert_\infty)^{m-\delta}$, however, is not
  possible (see feasibility below).
   \paragraph{Feasibility of ILPs.} Finding only a feasible solution of an ILP is easier than finding an
optimal solution. It can be done in time
   \begin{equation}\label{eq:feas}
   O(\sqrt m \Delta)^{(1 + o(1))m} + O(nm)
   \end{equation}
   by solving
   a Boolean convolution problem that has a more efficient algorithm than the (min,~+)-convolution
   problem that arises in the optimization version.
   Under the Strong Exponential Time Hypothesis (SETH) this running time
   is tight except for sub-polynomial factors.
   The SETH and the Exponential Time Hypothesis (ETH) are conjectures commonly used to prove conditional
   lower bounds. The SETH asserts that the satisfiability problem (SAT) cannot be solved in time $O(2^{\delta n})$ for any $\delta < 1$,
   while the (weaker) ETH asserts that this holds for some $\delta > 0$.
   If the SETH holds, then there is no
   $n^{f(m)} \cdot (\Delta + \lVert b \rVert_\infty)^{m - \delta}$ time algorithm for testing feasibility of
   ILPs for any $\delta > 0$ and any computable function~$f$.
   \paragraph*{Comparison to previous version.} A preliminary version of this article has appeared in the
   proceedings of ITCS 2019~\cite{DBLP:conf/innovations/JansenR19}. The analysis in that version has relied completely on the Steinitz Lemma, whereas the present article uses bounds on hereditary discrepancy,
   which is a cleaner fit given the requirements in the proof. Furthermore, this change
   leads to a slightly improved base $O(\sqrt{m}\Delta)$ in the running times instead of the previous base $O(m\Delta)$. This can be improved further in case of constraint matrices with small hereditary discrepancy.
   Moreover, by utilizing specialized algorithms for linear programming
   in fixed dimension we avoid the logarithmic dependency on $\lVert b \rVert_\infty$ as in the previous version.
   This also allows us to simplify the proof by removing a lemma that bounds the norm of the solution, which
   was required earlier.
   To the applications, we added the Coin Change problem.
  \subsection*{Other related work} It is notable that the case
  where the number of variables $n$ is fixed
  and not $m$ as here behaves differently.
  There is a $2^{O(n\log(n))} \cdot |I|^{O(1)}$ time algorithm ($|I|$ being the encoding length of the input), whereas
  an algorithm of the kind $f(m) \cdot |I|^{O(1)}$ (or even $|I|^{f(m)}$) is impossible for any computable function $f$, unless $\mathrm P = \mathrm{NP}$.
  This can be seen with a trivial reduction from Unbounded Knapsack (where $m=1$).
  The $2^{O(n\log(n))} \cdot |I|^{O(1)}$ time algorithm is due to
  Kannan~\cite{DBLP:journals/mor/Kannan87} improving over a $2^{O(n^2)}\cdot |I|^{O(1)}$ time algorithm
  by Lenstra~\cite{DBLP:journals/mor/Lenstra83}. It is a long
  open question whether $2^{O(n)}\cdot |I|^{O(1)}$ is possible instead;
  see also~\cite{DBLP:journals/algorithmica/Dadush14, dadush2012integer} for progress towards this question.

  Another intriguing question is whether a similar running time
  as in this work, e.g., $(\sqrt m \Delta)^{O(m)} \cdot n^{O(1)}$, is possible when upper bounds on variables are added to the ILP and they are not counted in $m$.
  In~\cite{DBLP:journals/talg/EisenbrandW20} an algorithm for this extension is given,
  but the exponent of $\Delta$ is $O(m^2)$.

  As for other lower bounds on pseudo-polynomial algorithms for integer programming,
  Fomin et al.~\cite{DBLP:conf/esa/FominPR018} prove that the running time cannot be
  $n^{o(m / \log(m))} \cdot \lVert b \rVert_\infty^{o(m)}$ unless the
  ETH (a weaker conjecture than the SETH) fails.
  Their reduction implies that there is no algorithm with running time
  $n^{o(m/\log(m))} \cdot (\Delta + \lVert b \rVert_\infty)^{o(m)}$, since
  in their construction the matrix $A$ is non-negative and therefore columns with entries larger
  than $\lVert b \rVert_\infty$ can be discarded; thus leading to $\Delta \le \lVert b \rVert_\infty$.
  Very recently, Knop et al.~\cite{DBLP:conf/stacs/KnopPW19} show that under the ETH there is also no
  $2^{o(m\log(m))} \cdot (\Delta + \lVert b \rVert_\infty)^{o(m)}$ time algorithm.
  An interesting aspect of this function is that it matches the dependency in $m$ achieved
  here and in~\cite{DBLP:journals/talg/EisenbrandW20} up to a constant in the exponent.
  Our lower bound differs substantially from the two above.
  We concentrate on the dependency on $\Delta$ and give a precise value
  of the constant in its exponent.
  
  Linear programming in fixed dimension, that is, solving (\ref{eq:standard})
  where $x\in \mathbb R^n_{\ge 0}$ instead of $x \in \mathbb Z^n_{\ge 0}$,
  has also been studied extensively. In a seminal work~\cite{DBLP:journals/jacm/Megiddo84}, Megiddo gave
  the first linear time algorithm for $m = O(1)$. Since then there have been numerous improvements~\cite{DBLP:journals/algorithmica/MatousekSW96, DBLP:journals/ipl/Clarkson86, DBLP:journals/siamcomp/Dyer86, DBLP:journals/mp/DyerF89, DBLP:journals/jacm/Clarkson95, DBLP:journals/dcg/Seidel91, DBLP:conf/stoc/Kalai92, agarwal1993efficient, DBLP:journals/jal/ChazelleM96, DBLP:journals/siamcomp/BronnimannCM99, DBLP:journals/talg/CHAN18}.
  The currently best randomized algorithm has a running time of $m^2 n + 2^{O(\sqrt{m \log(m)})}$
  (a combination of \cite{DBLP:journals/jacm/Clarkson95, DBLP:conf/stoc/Kalai92, DBLP:journals/algorithmica/MatousekSW96}) and the best deterministic algorithm~\cite{DBLP:journals/talg/CHAN18} has a running time
  of $O(m)^{m/2} \cdot \log^{3m}(m) \cdot n$. These works typically solve the dual of this problem,
  which is equivalent by standard complementary slackness arguments. Our algorithm for ILP uses these
  results as a subroutine.
  \section{Preliminaries}
   In the remainder of the article we will assume that $A$ has no duplicate columns.
   Note that we can completely ignore a column $i$, if there is another
   identical column $i'$ with $c_{i'} \ge c_i$.
   This implies that in time $O(n m) + O(\Delta)^m$ we can reduce to an
   instance without duplicate columns and,
   in particular, with $n\le (2\Delta + 1)^m$.
   The running time can be achieved as follows.
   We create a new matrix for the ILP
   with all $(2\Delta+1)^m$ possible columns (in lexicographic order)
   and objective value $c_i = -\infty$ for all columns $i$.
   Now we iterate over all $n$ old columns
   and compute in time $O(m)$ the index of the new column corresponding to the same entries.
   We then replace its objective value with the current one if this is bigger.
   In the upcoming running times we will omit the additive term $O(nm)$ and assume
   the duplicates are already eliminated ($O(\Delta)^m$ is always dominated by actual algorithms
   running time).

  Eisenbrand and Weismantel observed that using the Steinitz Lemma (with $\ell_\infty$ norm) one can solve integer programs
  efficiently, if all entries of the matrix are small integers and the
  number of constraints is fixed.
  \begin{theorem}[Steinitz Lemma]
    Let $\lVert \cdot \rVert$ be a norm in $\mathbb R^m$
    and $v^{(1)},\dotsc,v^{(t)}\in\mathbb R^m$ such that $\lVert v^{(i)} \rVert \le \Delta$ for all $i$
    and $v^{(1)} + \cdots + v^{(t)} = 0$. Then there exists a permutation $\pi\in S_t$ such that
    for all $j\in\{1,\dotsc, t\}$
    \begin{equation*}
      \lVert \sum_{i=1}^j v^{(\pi(i))} \rVert \le m \Delta .
    \end{equation*}
  \end{theorem}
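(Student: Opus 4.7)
The plan is to construct the permutation $\pi$ greedily, choosing one index at a time via a linear-programming vertex argument. Suppose indices $\pi(1), \dotsc, \pi(j-1)$ have already been fixed; write $S_{j-1} = \sum_{i<j} v^{(\pi(i))}$ for the current partial sum and let $J$ denote the set of not-yet-used indices. Since $v^{(1)} + \dotsc + v^{(t)} = 0$, the all-ones vector is a feasible point of the polytope
\[
  P_j \;=\; \Bigl\{ \lambda \in [0,1]^{J} : \sum_{i \in J} \lambda_i\, v^{(i)} = -S_{j-1} \Bigr\},
\]
so $P_j$ is nonempty. Because $P_j$ is cut out by $|J|$ pairs of box constraints together with $m$ linear equalities, every vertex $\lambda^\ast$ has at most $m$ coordinates strictly in $(0,1)$, the remaining ones being in $\{0,1\}$.

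The crucial decomposition is
\[
  -S_{j-1} \;=\; \sum_{i \in J \,:\, \lambda^\ast_i = 1} v^{(i)} \;+\; \sum_{i \in J \,:\, \lambda^\ast_i \in (0,1)} \lambda^\ast_i\, v^{(i)},
\]
in which the second sum involves at most $m$ terms of norm at most $\Delta$. I would choose a vertex $\lambda^\ast$ with at least one coordinate equal to $1$ — such a vertex is always reachable from the feasible point $\mathbf 1 \in P_j$ by walking along $P_j$ and stopping at the first face hit, while keeping at least one coordinate pinned at $1$ — and then set $\pi(j) = k$ for any such $k$. The restriction of $\lambda^\ast$ to $J \setminus \{k\}$ remains feasible for the next polytope $P_{j+1}$, so after selecting a fresh vertex on that smaller polytope the procedure iterates.

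To bound $\|S_j\| \le m\Delta$ at every step, the plan is to maintain the invariant that $S_j$ equals the negative of a nonnegative combination with coefficients in $[0,1]$ of at most $m$ currently-unused vectors; this is established by combining the decomposition above with the global relation $\sum_{i=1}^{t} v^{(i)} = 0$ to cancel the ``integer-$1$'' block against the prefix already chosen. The main obstacle, which I expect to be the bulk of the technical work, is verifying that this invariant really does propagate from $P_j$ to $P_{j+1}$, rather than merely holding at the instants where the vertex is recomputed. Corner cases — the final $|J| \le m$ steps and the possibility that some vertex has no coordinate at $1$ — need to be treated directly, but in each of these the fractional-support bound already yields $\|S_j\| \le m\Delta$ without further work.
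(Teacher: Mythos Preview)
The paper does not prove the Steinitz Lemma; it states the result and attributes the bound $m\Delta$ to Sevast'janov, pointing to Eisenbrand--Weismantel for an exposition. Your LP-vertex scheme is precisely the Grinberg--Sevast'janov argument those references develop, so at the level of strategy you are aligned with what the paper cites.

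The execution, however, has a genuine gap exactly where you suspect one. After setting $\pi(j)=k$ with $\lambda^\ast_k=1$ you obtain
\[
  -S_j \;=\; \sum_{i\in J\setminus\{k\}} \lambda^\ast_i\,v^{(i)}
       \;=\; \sum_{\substack{i\in J\setminus\{k\}\\ \lambda^\ast_i=1}} v^{(i)}
             \;+\; \sum_{\substack{i\in J\\ \lambda^\ast_i\in(0,1)}} \lambda^\ast_i\,v^{(i)},
\]
and the first block may contain arbitrarily many terms, so the invariant ``$-S_j$ is a $[0,1]$-combination of at most $m$ unused vectors'' does not follow. Your proposed remedy, cancelling the $1$-block against the already-chosen prefix, cannot work: that block lies inside the \emph{unused} set $J$, which is disjoint from the prefix, and the global relation $\sum_i v^{(i)}=0$ only restates $\sum_{i\in J}v^{(i)}=-S_{j-1}$, which you have already used. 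As a side issue, $\mathbf 1$ is itself always a vertex of $P_j$ (it is a vertex of the cube lying on the affine subspace), so ``a vertex with some coordinate equal to $1$'' is a vacuous requirement and the walk-from-$\mathbf 1$ heuristic does none of the work you need.

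In Sevast'janov's actual argument the invariant is kept on the complementary side --- one maintains a feasible $\lambda$ with at most $m$ coordinates \emph{strictly below} $1$ --- and, crucially, when passing to a vertex one stays in the face on which all current $1$-coordinates remain equal to $1$, so that property survives the vertex step. The index to drop is then extracted via a short linear-dependence argument among the at most $m{+}1$ active vectors, not by the cancellation you sketch. Your forward formulation is equivalent to this after the substitution $\lambda\mapsto\mathbf 1-\lambda$, but the propagation mechanism has to be the face restriction, and that piece is missing from the proposal.
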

  The proof of the bound $m\Delta$ is due to Sevastyanov~\cite{sevast-steinitz} (see
  also~\cite{DBLP:journals/talg/EisenbrandW20} for a good overview).
  Our algorithmic results rely on a similar, but weaker property.
  Roughly speaking, we only need that there is some $j \approx t/2$ with $\lVert \sum_{i=1}^j v^{(\pi(i))} \rVert$ being small.
  All other partial sums are insignificant.
  As it is a weaker property, we can hope for better bounds than $m\Delta$,
  which is indeed true. The bounds we need come from discrepancy
  theory, for which we now state relevant definitions and results.
  \begin{definition}
    For a matrix $A\in\mathbb R^{m\times n}$ its discrepancy is
    \begin{equation*}
      \mathrm{disc}(A) = \min_{z\in\{0,1\}^n} \left\lVert A\left(z - \left(\frac 1 2,\dotsc,\frac 1 2\right)^T\right)\right\rVert_\infty .
    \end{equation*}
  \end{definition}
  Discrepancy theory originates in the problem of coloring the elements of a ground set with two colors
  such that a given family of subsets are all colored evenly, i.e., the number of elements of
  each color is approximately the same. When $A$ is the incidence matrix of this family of sets,
  $z$ in the definition above gives a coloring and the $\ell_\infty$ norm its discrepancy.
  Discrepancy, however, is also studied
  for arbitrary matrices. If $A$ is the matrix of a linear program as in our case, this definition
  corresponds to finding an integral solution that approximates $x=(1/2,\dotsc,1/2)^T$.
  Our algorithm is based on dividing a solution into two similar parts. Therefore,
  discrepancy is a natural measure. However, we need a definition that
  is stable when restricting to a subset of the columns.
  \begin{definition}
    The hereditary discrepancy of a matrix $A\in\mathbb R^{m\times n}$ is
    \begin{equation*}
      \mathrm{herdisc}(A) = \max_{I\subseteq\{1,\dotsc,n\}} \mathrm{disc}(A_I),
    \end{equation*}
    where $A_I$ denotes the matrix $A$ restricted to the columns $I$.
  \end{definition}
  Hereditary discrepancy is often used in the context of rounding non-integral solutions,
  see for example~\cite{DBLP:journals/ejc/LovaszSV86}.
  For our algorithm we need to split a solution $x$ into two similar parts, which can be
  seen as rounding $x/2$. The following lemma shows that
  by paying a factor of $2$ in the discrepancy we can
  also get a balanced split of the $\ell_1$ norm of the solutions.
  \begin{lemma}\label{lem-split}
    Let $x\in\mathbb Z^n_{\ge 0}$. Then there is a vector $z\in\mathbb Z^n_{\ge 0}$
    with $z_i \le x_i$ for all $i$ and
\begin{equation*}
  \left\lVert A \left(z - \frac x 2 \right) \right\rVert_\infty \le \mathrm{herdisc}(A) .
\end{equation*}
    Furthermore, if $\lVert x \rVert_1 > 1$, then there is a vector $z'\in\mathbb Z^n_{\ge 0}$
    with $z'_i \le x_i$ for all $i$,
  $1 / 6 \cdot \lVert x \rVert_1 \le \lVert z' \rVert_1 \le 5 / 6 \cdot \lVert x \rVert_1$, and
\begin{equation*}
  \left\lVert A \left(z' - \frac x 2\right) \right\rVert_\infty \le 2 \cdot \mathrm{herdisc}(A) .
\end{equation*}
  \end{lemma}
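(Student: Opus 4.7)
For the first part my approach is a direct discrepancy reduction: writing $x_i = 2\lfloor x_i/2 \rfloor + r_i$ with $r_i \in \{0,1\}$, let $I := \{i : r_i = 1\}$. The definition of hereditary discrepancy applied to the submatrix $A_I$ yields $\tilde{z} \in \{0,1\}^I$ with $\lVert A_I(\tilde{z} - (1/2,\dotsc,1/2)^T)\rVert_\infty \le \mathrm{herdisc}(A)$; setting $z_i := \lfloor x_i/2 \rfloor + \tilde{z}_i$ (with $\tilde{z}_i := 0$ for $i \notin I$) gives $0 \le z \le x$ componentwise, and only the odd indices contribute to $A(z - x/2)$, which equals $A_I(\tilde{z} - (1/2,\dotsc,1/2)^T)$.

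For the second part the idea is to split $x$ in half and run the first part on each half. Set $N := \lVert x \rVert_1 \ge 2$, expand $x$ into a multiset of $N$ columns of $A$, and let $x_L, x_R$ be the count vectors of the first $\lfloor N/2 \rfloor$ and remaining $\lceil N/2 \rceil$ copies; then $x_L + x_R = x$ and $N_L := \lVert x_L \rVert_1$, $N_R := \lVert x_R \rVert_1$ both satisfy $N_L, N_R \ge N/3$ (which holds for all $N \ge 2$). Applying the first part to $x_L$ and $x_R$ yields $z_L \le x_L$ and $z_R \le x_R$, each with discrepancy at most $\mathrm{herdisc}(A)$. I then consider the four candidates
\[
z' \in \{z_L + z_R,\; z_L + (x_R - z_R),\; (x_L - z_L) + z_R,\; (x_L - z_L) + (x_R - z_R)\};
\]
each lies in $[0,x]$ coordinate-wise, and a short expansion shows $A(z' - x/2) = \pm A(z_L - x_L/2) \pm A(z_R - x_R/2)$ with signs depending on the candidate, so every candidate automatically satisfies $\lVert A(z' - x/2)\rVert_\infty \le 2\,\mathrm{herdisc}(A)$.

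The substantive step, which I expect to be the only real obstacle, is to show that at least one candidate has $\ell_1$-norm in $[N/6, 5N/6]$. Writing $\alpha := \lVert z_L\rVert_1$ and $\beta := \lVert z_R\rVert_1$, the four norms are $\alpha + \beta$, $\alpha + N_R - \beta$, $N_L - \alpha + \beta$, and $N - \alpha - \beta$; they form two complementary pairs each summing to $N$. Assume for contradiction that all four fall outside $[N/6, 5N/6]$. Swapping $z_L \leftrightarrow x_L - z_L$ and/or $z_R \leftrightarrow x_R - z_R$ (which merely permutes the four candidates) lets me assume $\alpha + \beta < N/6$. Combining this with either $\alpha + N_R - \beta < N/6$ or $N_L - \alpha + \beta < N/6$ gives $2\alpha + N_R < N/3$ or $2\beta + N_L < N/3$ respectively, and since $N_L, N_R \ge N/3$ each forces $\alpha < 0$ or $\beta < 0$, contradicting $\alpha, \beta \ge 0$. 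The only ancillary check is the elementary inequality $\lfloor N/2 \rfloor \ge N/3$ for $N \ge 2$, which is routine.
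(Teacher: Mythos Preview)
Your proof is correct and follows essentially the same approach as the paper: the first part is identical (decompose $x_i = 2\lfloor x_i/2\rfloor + r_i$ and apply discrepancy to the odd coordinates), and for the second part both you and the paper split $x$ into two halves of sizes $\lfloor N/2\rfloor$ and $\lceil N/2\rceil$, apply the first part to each, and combine. The only difference is in how the $\ell_1$ bound is verified: the paper fixes a single candidate by orienting so that $\lVert z_L\rVert_1 \le N_L/2$ and $\lVert z_R\rVert_1 \ge N_R/2$, and then bounds its norm directly via a small case split on whether $N\le 5$ or $N\ge 6$ (in fact obtaining the slightly sharper range $[N/4,\,5N/6]$); your four-candidate contradiction argument is a clean alternative that avoids that case split.
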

We emphasize that the lemma is symmetric in the sense that the same properties hold when substituting $z$ for $x-z$ ($z'$ for $x - z'$).
For completeness a proof of the lemma is given in the appendix.
Our algorithm's running time will depend on $\mathrm{herdisc}(A)$,
so we will state some standard bounds on it.
  \begin{theorem}[Spencer's Six Standard Deviations Suffice~\cite{spencer1985six}]\label{th:spencer}
    For every matrix $A\in\mathbb R^{m\times n}$ with biggest absolute value of an entry $\Delta$,
    \begin{equation*}
      \mathrm{herdisc}(A) \le 6 \sqrt{m} \cdot \Delta .
    \end{equation*}
  \end{theorem}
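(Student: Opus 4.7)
The plan is to apply Spencer's theorem to each column-restriction. Because $\mathrm{herdisc}(A) = \max_{I \subseteq \{1, \dotsc, n\}} \mathrm{disc}(A_I)$ and each $A_I$ is itself an $m \times k$ matrix with entries in $[-\Delta, \Delta]$, it suffices to prove the uniform bound $\mathrm{disc}(B) \le 6\sqrt{m}\,\Delta$ for every such $B$. Rescaling $B \mapsto B/\Delta$ reduces the problem to showing $\mathrm{disc}(B) \le 6\sqrt{m}$ whenever the entries of $B$ lie in $[-1, 1]$.

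The heart of the argument is the Beck--Spencer partial coloring lemma: there is an absolute constant $c_0$ such that every $m \times k$ matrix $B$ with entries in $[-1, 1]$ admits a partial coloring $\chi \in \{-1, 0, +1\}^k$ with $|\mathrm{supp}(\chi)| \ge k/2$ and $\|B\chi\|_\infty \le c_0 \sqrt{m}$ (with a sharper bound of order $c_0 \sqrt{k \log(2m/k)}$ when $k < m$). I would establish it via the entropy method. For a uniformly random $\sigma \in \{\pm 1\}^k$, each row $(B\sigma)_i$ is a sum of $k$ bounded terms, hence sub-Gaussian with variance proxy $k$; binning each row into intervals of width of order $\sqrt{m}$ and applying sub-Gaussian tail bounds, the $m$-coordinate signature of $\sigma$ has Shannon entropy at most $(1 - \gamma) k$ for some fixed $\gamma > 0$. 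By pigeonhole some signature is attained by more than $2^{k/2}$ signings, and a Hamming-diameter bound on the hypercube (Kleitman's antichain theorem or a comparable isoperimetric inequality) forces two of them to differ in at least $k/2$ coordinates. Their half-difference $\chi = (\sigma - \sigma')/2$ is the desired partial coloring.

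To turn this into a full $\{0,1\}$-coloring I would iterate. Starting from the fractional point $x_0 = (1/2)\mathbf{1}$, at round $t$ I apply the partial coloring lemma to $B$ restricted to the currently fractional coordinates and update $x_{t+1} = x_t + \chi_t/2$, extending $\chi_t$ by zero on already-rounded coordinates. Each round fixes at least half of the remaining fractional coordinates to $\{0, 1\}$, so after $O(\log k)$ rounds $x_\mathrm{final} \in \{0, 1\}^k$. The accumulated error is bounded by $\tfrac{1}{2}\sum_t \|B\chi_t\|_\infty$, and a careful summation that exploits the sharper defect once $k_t$ drops below $m$ yields $\mathrm{disc}(B) \le 6\sqrt{m}$ for a sufficiently small choice of $c_0$.

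The main obstacle is the partial coloring lemma itself, the combinatorial core of Spencer's ``Six Standard Deviations Suffice''. The entropy estimate must strike a delicate balance: bin widths small enough to force $\|B\chi\|_\infty = O(\sqrt m)$, yet wide enough to push the signature entropy strictly below $k$ by a constant factor; and matching the constant $6$ explicitly requires tracking constants through the sub-Gaussian tail, the pigeonhole step, and the geometric series in the iteration. Once this lemma is in hand with a workable absolute constant, the iteration and the passage from discrepancy to hereditary discrepancy are routine bookkeeping.
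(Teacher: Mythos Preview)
Your sketch of the entropy/partial-coloring argument is essentially Spencer's original proof and is correct in outline, but it goes far beyond what the paper does. The paper does not prove this theorem at all: it simply cites Spencer's result for square $[-1,1]$-matrices, remarks (with a reference to Lov\'asz--Spencer--Vesztergombi) that the same bound $6\sqrt{m}$ holds for arbitrary $m\times n$ matrices with entries in $[-1,1]$, and then observes that scaling by $\Delta$ gives $6\sqrt{m}\,\Delta$. The passage from discrepancy to hereditary discrepancy is implicit, since Spencer's bound already applies to every submatrix $A_I$.

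So the comparison is: the paper treats this as a black-box citation plus two trivial reductions (non-square extension and scaling), whereas you reprove the black box. Your reductions---restricting to $A_I$ and dividing by $\Delta$---match the paper's exactly; the partial-coloring iteration you describe is the substance of Spencer's paper, not of this one. If the assignment was to reproduce the paper's treatment, a one-line citation plus the scaling remark would have sufficed.
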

  This slightly differs from the original statement.
  The original paper considers square matrices ($n=m$) with biggest absolute value $1$
  and gives a bound of $6 \sqrt{n} = 6 \sqrt{m}$.
  However, the proof easily holds also for $6\sqrt{m}$ in non-square matrices, 
  as mentioned for example in~\cite{DBLP:journals/ejc/LovaszSV86}.
  By scaling both sides we obtain $6 \sqrt{m}\cdot\Delta$ for matrices with
  biggest absolute value $\Delta$.
  
  Spencer's proof is not constructive, that is, it is unclear how to compute the $z$ from
  the defintion of discrepancy.
  There has been significant work towards making it
  constructive~\cite{DBLP:conf/focs/Bansal10, DBLP:journals/siamcomp/LovettM15}.
  For our algorithm, however, we will not need a constructive variant.

  There are matrices for which Spencer's bound is tight up to a constant factor.
  For specific matrices it might be lower.
  The linear dependency on $\Delta$, however, is required for any matrix $A$.
  \begin{lemma}\label{disc-lb}
    For every matrix $A\in\mathbb R^{m\times n}$ with absolute value of an entry $\le \Delta$,
    \begin{equation*}
     \mathrm{herdisc}(A) \ge \frac{\Delta}{2} .
    \end{equation*}
  \end{lemma}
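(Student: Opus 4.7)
The plan is to show the lower bound by exhibiting a single column whose restricted discrepancy already matches $\Delta/2$, so no clever choice of $I$ or subtle averaging is required.

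First I would pick an index $j^\ast$ of a column of $A$ that contains an entry of absolute value equal to $\Delta$ (the statement only becomes non-trivial when $\Delta$ is attained; if no entry has absolute value $\Delta$ the claim is vacuous after replacing $\Delta$ by the true maximum). Then I would take $I=\{j^\ast\}$ in the definition of hereditary discrepancy, so that $A_I$ is a single column $A_{j^\ast}\in\mathbb{R}^m$ with $\lVert A_{j^\ast}\rVert_\infty=\Delta$.

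Next I would just evaluate $\mathrm{disc}(A_I)$ directly from the definition. Since $n=1$ here, the variable $z$ ranges over $\{0,1\}$, and $z-\tfrac12\in\{-\tfrac12,\tfrac12\}$. Hence
\begin{equation*}
\mathrm{disc}(A_I)=\min_{z\in\{0,1\}}\Bigl\lVert A_{j^\ast}\bigl(z-\tfrac12\bigr)\Bigr\rVert_\infty=\tfrac12\lVert A_{j^\ast}\rVert_\infty=\tfrac{\Delta}{2}.
\end{equation*}
Since $\mathrm{herdisc}(A)$ is the maximum of $\mathrm{disc}(A_I)$ over all subsets $I$, this single choice already yields $\mathrm{herdisc}(A)\ge \Delta/2$.

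There is no real obstacle here; the only subtlety worth noting is that the bound $\Delta/2$ is attained precisely because the $\{0,1\}$ rounding of $1/2$ can never be closer than $1/2$, regardless of the sign or magnitude of the entries in the column. This also explains why the linear dependency on $\Delta$ cannot be avoided in any upper bound for $\mathrm{herdisc}(A)$, such as Spencer's $6\sqrt{m}\,\Delta$.
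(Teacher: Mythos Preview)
Your proposal is correct and follows exactly the paper's own argument: the paper proves the lemma in a single sentence by taking $I=\{i\}$ where $A_i$ is a column containing an entry of absolute value $\Delta$, which is precisely what you do. Your added remark that the bound is only meaningful when $\Delta$ is actually attained is a fair clarification of the slightly ambiguous phrasing in the statement.
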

  This can be seen by taking $I = \{i\}$ in the definition of $\mathrm{herdisc}(A)$ with
  $A_i$ being a column with an entry of absolute value $\Delta$.
  An example where the dependency on $m$ is lower than in Spencer's
  theorem are matrices with a small $\ell_1$ norm in every column.
  \begin{theorem}[Beck, Fiala~\cite{DBLP:journals/dam/BeckF81}]\label{beck-fiala}
    For every matrix $A\in\mathbb R^{m\times n}$, where the $\ell_1$ norm of each column is at most $t$ it holds that $\mathrm{herdisc}(A) < t$.
  \end{theorem}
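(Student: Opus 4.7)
The plan is the classical Beck--Fiala floating-variables rounding, adapted to yield a strict inequality. The hereditary statement is automatic: for every $I \subseteq \{1,\dots,n\}$ the submatrix $A_I$ inherits the bound on column $\ell_1$ norms, so it suffices to prove $\mathrm{disc}(A) < t$ for a single matrix.

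I would initialize $y := (\tfrac{1}{2},\dots,\tfrac{1}{2}) \in [0,1]^n$ and iteratively push coordinates to $\{0,1\}$. At any stage call coordinate $j$ \emph{free} if $y_j \in (0,1)$ and \emph{fixed} otherwise, and call row $i$ \emph{dangerous} if $\sum_{j \text{ free}} |A_{ij}| > t$. Double counting $\sum_{i \text{ dangerous},\, j \text{ free}} |A_{ij}|$ yields
\begin{equation*}
  t \cdot \#\{\text{dangerous rows}\} \;<\; \sum_{j \text{ free}} \sum_{i} |A_{ij}| \;\le\; t \cdot \#\{\text{free coords}\},
\end{equation*}
so there are strictly more free coordinates than dangerous rows. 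Hence the homogeneous system ``$(Av)_i = 0$ for every dangerous $i$ and $v_j = 0$ for every fixed $j$'' admits a nonzero solution $v$. I would then walk $y \leftarrow y + \lambda v$, picking the sign and the largest $\lambda > 0$ that keeps $y \in [0,1]^n$, and freeze the free coordinate that first touches $\{0,1\}$. After at most $n$ such rounds every coordinate is fixed; set $z := y$.

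To bound the discrepancy, fix a row $i$ and observe that the free set only shrinks, so the restricted norm $\sum_{j \text{ free}} |A_{ij}|$ is monotonically nonincreasing; row $i$ therefore passes from dangerous to safe at most once. Let $\tau$ be that moment (or $\tau = 0$ if $i$ was already safe). For every update direction $v$ used before $\tau$ we have $(Av)_i = 0$, so $(Ay)_i$ at time $\tau$ still equals $(A \cdot \tfrac{1}{2}\mathbf{1})_i$. After $\tau$ the restricted norm is $\le t$ and every still-free $y_j$ sits strictly inside $(0,1)$, so its remaining displacement is strictly below $1$. Combining,
\begin{equation*}
  \bigl|(Az)_i - (A \cdot \tfrac{1}{2}\mathbf{1})_i\bigr| \;\le\; \sum_{j \text{ free at } \tau} |A_{ij}| \cdot \bigl|y_j^{\mathrm{final}} - y_j^{(\tau)}\bigr| \;<\; t,
\end{equation*}
and maximizing over $i$ gives $\mathrm{disc}(A) < t$.

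The delicate point is obtaining the \emph{strict} inequality $< t$ rather than just $\le t$. It rests on two features of the construction: the dangerousness threshold is strict ($> t$ versus $\le t$), so a row that ever becomes safe satisfies the restricted-norm bound $\le t$ without any slack; and free coordinates lie in the \emph{open} interval $(0,1)$ by definition, so their remaining motion is strictly less than $1$ in absolute value. The edge case in which row $i$ becomes safe only at the very end is harmless, since then there is no post-$\tau$ motion and the change is zero.
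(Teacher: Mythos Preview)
The paper does not supply a proof of this theorem; it is simply quoted from Beck and Fiala's original work and used as a black box. Your write-up is the classical Beck--Fiala floating-variables argument and is correct, including the care you take to secure the strict inequality (dangerous rows defined via a strict threshold, free coordinates lying in the open interval so their residual motion is strictly below~$1$). One tiny caveat worth stating explicitly is that the strict bound $\mathrm{disc}(A) < t$ presupposes $t > 0$; otherwise $A = 0$ and $\mathrm{herdisc}(A) = 0 \not< 0$. This is harmless in the paper's applications, where $t \ge 1$.
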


  \section{Algorithm}
    First, we will show how to compute the best solution $x^*$ to (\ref{eq:standard})
    with the additional constraint $\lVert x^* \rVert_1 \le K$.
    Here the running time has a logarithmic dependence on $K$.
    Then, we will remove this dependence while allowing arbitrarily
    large solutions. Further, we will elaborate an improvement
    for finding any feasible solution and show how to cope
    with unbounded problems. Finally, we give a more fine-grained
    study of the problem when the maximum entries of the rows differ.
    
    \subsection{Dynamic program}
    Let $H \ge \mathrm{herdisc}(A)$ be a given upper bound on the hereditary discrepancy of $A$.
    For every $i=0,1\dotsc,\ell = \lceil \log_{6/5}(K) \rceil$ and every $b'$ with
    $\lVert b' - 2^{i-\ell} \cdot b \rVert_\infty \le 4 H$
    we solve
    \begin{equation}
 \max\left\{c^T x : A x = b', \lVert x \rVert_1 \le \left(\frac 6 5 \right)^{i}, x\in\mathbb Z^n_{\ge 0}\right\} .\label{dyn-ilp}
    \end{equation}
    We iteratively derive solutions
    for $i$ using pairs of solutions for $i-1$. Ultimately, we will compute a solution
    for $i = \ell$ and $b' = b$.

    If $i = 0$ the solutions are trivial, since $\lVert x \rVert_1 \le 1$.
    This means they correspond exactly to the columns of $A$.
    Fix some $i > 0$ and $b'$ and let $x^*$ be an
    optimal solution to (\ref{dyn-ilp}).
    By Lemma~\ref{lem-split} there exists a $0 \le z \le x^*$ with
    $\lVert A z - b'/2 \rVert_\infty \le 2 H$ and
    \begin{equation*}
      \lVert z \rVert_1 \le 
      \begin{cases}
      \frac 5 6 \lVert x^* \rVert_1 \le \frac 5 6 \cdot \left(\frac 6 5\right)^i = \left(\frac 6 5\right)^{i-1} &\text{ if } \lVert x^* \rVert_1 > 1, \\
      \lVert x^* \rVert_1 \le 1 \le \left(\frac 6 5\right)^{i-1} &\text{ otherwise.}
      \end{cases}
    \end{equation*}
    The same holds for $x^* - z$.
    Then $z$ is an optimal solution to
    \begin{equation*}
    \max \left\{c^Tx : Ax = b'', \lVert x \rVert_1 \le \left(\frac 6 5 \right)^{i - 1}, x\in \mathbb Z^n_{\ge 0} \right\},
    \end{equation*}
    where $b'' = Az$. This is because if there was a solution $z^*$ of higher value, then $z^* + x^* - z$ would
    be feasible for~(\ref{dyn-ilp}) and have a higher value than $x^*$, contradicting its optimality.
    Likewise, $x^* - z$ is an optimal solution to
    \begin{equation*}
        \max \left\{c^T x : Ax = b' - b'', \lVert x \rVert_1 \le \left(\frac 6 5\right)^{i - 1}, x\in \mathbb Z^n_{\ge 0} \right\}.
    \end{equation*}
    We will prove that $\lVert b'' - 2^{(i-1)-\ell} \cdot b\rVert_\infty \le 4H$
    and $\lVert (b' - b'') - 2^{(i-1)-\ell} \cdot b\rVert_\infty \le 4H$.
    This implies that we can look up solutions for $b''$ and $b' - b''$ in the dynamic table
    and their sum is a solution for $b'$. Clearly it is also optimal. We do not know $b''$, but
    we can guess it: There are only $(8H + 1)^m$ candidates.
    To compute an entry, we therefore enumerate all possible $b''$ and take the two partial solutions
    (for $b''$ and $b' - b''$), where the sum of both values is maximized.
    To verify that the inequalities above holds, we calculate
    \begin{align*}
      \left\lVert b'' - 2^{(i-1)-\ell} b \right\rVert_\infty &= \left\lVert A z - \frac 1 2 b' + \frac 1 2 b' - 2^{(i-1)-\ell} b \right\rVert_\infty \\
  &\le \left\lVert A z - \frac 1 2 b'\right\rVert_\infty + \left\lVert\frac 1 2 b' - 2^{(i-1)-\ell} b \right\rVert_\infty \\
  &\le 2\cdot\mathrm{herdisc}(A) + \frac 1 2 \left\lVert b' - 2^{i-\ell} b \right\rVert_\infty
  \le 4H .
    \end{align*}
    The same holds for $b' - b''$, since $z$ and $x^* - z$ are interchangeable.
      The dynamic table has $O(H)^m \cdot \log(K)$ entries.
      To compute an entry, $O(n\cdot m) \le O(\Delta)^m \le O(H)^m$ operations are necessary during initialization and $O(H)^m$ in the iterative calculations.
      This gives a total running time of
      \begin{equation}\label{eq:rt-naive}
         O(H)^{2m} \cdot \log(K) .
      \end{equation}
    \subsection{Convolution}\label{sec-convolution}
      The careful reader may wonder, whether the computation
      of entries in the dynamic table can be improved.
      Let $D_i$ be the set of vectors $b'$ with
      $\lVert b' - 2^{i-\ell} \cdot b \rVert_\infty \le 4H$.
      Recall, the dynamic programs computes values
      for each element in $D_0, D_{1},\dotsc, D_{\ell}$.
      More precisely, for the value of $b'\in D_i$ we consider
      vectors $b''$ such that $b'', b'-b'' \in D_{i-1}$ and
      take the maximum sum of the values for $b'', b'-b''$ among all.
      For illustration consider the case of $m=1$. 
      Here we have that $b'\in D_i$ is equivalent to $-4H \le b'- 2^{i-\ell} \cdot b  \le 4H$. It is not hard to see that then
      the problem can be formulated as the following well-studied problem.
      \begin{definition}[{(min,~+)-convolution}]
      Given input variables $r_1,\dotsc,r_n \in \mathbb R$ and $s_1,\dotsc,s_n \in \mathbb R$, compute
      $t_1,\dotsc,t_n \in \mathbb R$, where
       $t_k = \min_{i + j = k} r_i + s_j$.
      \end{definition}
     We can also define (max,~+)-convolution as the counterpart where the maximum is
     taken instead of the minimum. The two problems are equivalent as
     each of them can be transformed to the other
     by negating the elements.
     There is a trivial $O(n^2)$ time algorithm for (min,~+)-convolution and it has been conjectured that
     there exists no truly sub-quadratic algorithm~\cite{DBLP:conf/icalp/CyganMWW17}. There does, however, exist an $O(n^2 / \log(n))$
     time algorithm~\cite{DBLP:journals/algorithmica/BremnerCDEHILPT14}, which we are going to use. In fact, there is an even faster algorithm
     that runs in time $O(n^2 / 2^{\Omega(\sqrt{\log (n)})})$~\cite{DBLP:conf/stoc/ChanL15}.

     Also when $m > 1$ the task of deriving $D_i$ from $D_{i-1}$ can
     be reformulated as a (min,~+)-convolution instance.
     For this, the $m$ dimensions of each $b'\in D_i$ are embedded
     in a single dimension with appropriate zero padding between them.
     The precise construction and its proof of correctness
     require some tedious calculations, which are deferred to the appendix.
     Using an algorithm for (min,~+)-convolution with running time $T(n)$
     we get an algorithm for ILP with running time
     $T(O(H)^m) \cdot \log(K)$. Inserting $T(n) = n^2/\log(n)$ and
     using $H\ge \Delta / 2$, we slightly improve on (\ref{eq:rt-naive})
     and obtain a running time of
     \begin{equation}\label{eq:rt-conv}
         O(H)^{2m} \cdot \frac{\log(K)}{\log(\Delta)}.
     \end{equation}
     Even more interesting though, a sub-quadratic algorithm for
     (min,~+)-convolution, where $T(n) = n^{2 - \delta}$ for some
     $\delta > 0$, would directly improve the exponent. Next, we
     will consider the problem of only testing feasibility of an ILP.
     Since we only record whether or not there exists a solution for
     a particular right-hand side, the convolution problem reduces to the following.
     \begin{definition}[\textsc{Boolean convolution}]
     Given input variables $r_1,\dotsc,r_n\in \{0, 1\}$ and $s_1,\dotsc,s_n\in\{0, 1\}$ compute
     $t_1,\dotsc,t_n\in \{0, 1\}$,
     where $t_k = \bigvee_{i + j = k} r_i \land s_j$.
     \end{definition}
     
     This problem can be solved very efficiently via fast Fourier transform. We compute the $(+,\cdot)$-convolution of the input.
     It is well known that this can be done using FFT
     in time $O(n\log(n))$.
     The $(+,\cdot)$-convolution of $r$ and $s$ is the vector $t$,
     where $t_k = \sum_{i+j = k} r_i \cdot s_j$.
     To get the Boolean convolution instead, we
     simply replace each $t_k > 0$ by $1$.
     Using $T(n) = O(n\log(n))$ for the convolution algorithm yields
     that a feasible solution can be found in time
     \begin{equation}\label{eq:rt-feas}
         O(H)^{m} \cdot \log(\Delta) \cdot \log(K) .
     \end{equation}
    \subsection{Proximity}\label{sec-proximity}
    Eisenbrand and Weismantel gave the following bound on the proximity 
    between fractional and integral solutions.
    \begin{theorem}[\cite{DBLP:journals/talg/EisenbrandW20}]
      \label{th-proximity}
      Let $\max\{c^T x : Ax = b, x\in\mathbb Z^n_{\ge 0}\}$ be feasible and bounded.
      Let $x^*$ be an optimal vertex solution of the fractional relaxation. Then there exists
      an optimal solution $z^*$ with
      \begin{equation*}
        \lVert z^* - x^* \rVert_1 \le m (2m\Delta+1)^m.
      \end{equation*}
    \end{theorem}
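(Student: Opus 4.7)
The plan is a cycle-removal proximity argument combining the Steinitz Lemma with LP duality at the vertex $x^*$. Suppose for contradiction that every optimal integer solution has $\ell_1$-distance strictly greater than $m(2m\Delta+1)^m$ from $x^*$, and pick such an optimum $z^*$ minimising $\lVert z^* - x^* \rVert_1$. Setting $y = z^* - x^*$ one has $Ay = 0$; split $y = y^+ - y^-$ with disjointly-supported non-negative parts and form a multiset $W$ in $\mathbb Z^m$ by taking $\lfloor y^+_i\rfloor$ copies of $+A_i$ at every coordinate where $z^*_i > x^*_i$ and $\lfloor y^-_j\rfloor$ copies of $-A_j$ at every coordinate where $z^*_j < x^*_j$; each element of $W$ has $\ell_\infty$-norm at most $\Delta$. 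Because $x^*$ is a vertex of $\{Ax = b,\ x \ge 0\}$, at most $m$ of its coordinates are non-integer, so the floors lose at most $m$ units in total and $|W| \ge \lVert y\rVert_1 - m$.

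The residual $\sum_{w \in W} w = A(\lfloor y^+\rfloor - \lfloor y^-\rfloor)$ differs from $Ay = 0$ only through those $\le m$ fractional entries of $x^*$, so its $\ell_\infty$-norm is at most $m\Delta$. Close $W$ into a zero-sum family by appending a short bundle of auxiliary integer vectors of $\ell_\infty$-norm $\le \Delta$ whose sum is $-\sum_W w$, and apply the Steinitz Lemma in $\ell_\infty$: every prefix sum of the augmented walk is an integer vector in the box $[-m\Delta, m\Delta]^m$, which has exactly $(2m\Delta+1)^m$ lattice points. Since $|W|$ exceeds $m(2m\Delta+1)^m - m$ --- comfortably more than the lattice-point count plus the few auxiliary positions --- two real prefix sums must coincide, and the real steps strictly between them form a nonempty sub-multiset $C \subseteq W$ whose signed sum in $\mathbb Z^m$ is $0$.

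Translate $C$ into a perturbation $\delta \in \mathbb Z^n$: set $\delta_i$ to be the negative of the number of $+A_i$-steps in $C$ at every coordinate with $z^*_i > x^*_i$, and $\delta_j$ to be the number of $-A_j$-steps in $C$ at every coordinate with $z^*_j < x^*_j$. Then $A\delta = 0$ and $z^* + \delta \ge 0$ (at a $+A_i$-coordinate the removal count is at most $\lfloor y^+_i \rfloor \le z^*_i - x^*_i \le z^*_i$, and the $-A_j$-coordinates only gain), and $\lVert (z^*+\delta) - x^*\rVert_1 = \lVert z^* - x^*\rVert_1 - |C| < \lVert z^* - x^*\rVert_1$. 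Finally $c^T\delta \ge 0$ by LP duality at $x^*$: take an optimal dual vector $\pi$ so that $c_i - \pi^T A_i \le 0$ everywhere, with equality on the support of $x^*$. Since $A\delta = 0$, $c^T\delta = \sum_i (c_i - \pi^T A_i)\delta_i$; coordinates with $\delta_j > 0$ satisfy $x^*_j > z^*_j \ge 0$ and therefore lie in the support of $x^*$ (contribution $0$), while coordinates with $\delta_i < 0$ pair a non-positive reduced cost with a negative $\delta_i$ (contribution $\ge 0$). Thus $c^T(z^* + \delta) \ge c^T z^*$: strict inequality contradicts the optimality of $z^*$, while equality contradicts its choice as the closest optimum to $x^*$. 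The delicate step is the pigeonhole in the middle paragraph: one must ensure that a matched pair of coinciding prefix sums can be chosen at real positions, so that $C$ contains no auxiliary closing step and translates to a valid integer perturbation of $z^*$. The factor $m$ in front of $(2m\Delta+1)^m$ is exactly the slack that makes this bookkeeping --- and the accompanying control of the $\le m$ fractional coordinates of the vertex $x^*$ --- go through.
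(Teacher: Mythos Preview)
The paper does not prove this theorem at all: it is quoted verbatim from Eisenbrand and Weismantel~\cite{DBLP:conf/soda/EisenbrandW18} and then only \emph{used} (in the paragraph that follows) to shrink the right-hand side. So there is no ``paper's own proof'' to compare against. That said, your sketch is precisely the Eisenbrand--Weismantel strategy --- Steinitz reordering of the difference $z^*-x^*$, a pigeonhole over the box $[-m\Delta,m\Delta]^m$, cycle removal, and complementary slackness at the LP vertex --- so methodologically you are on target. The complementary-slackness half of your argument is clean and correct.

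The gap is exactly where you flagged it. ``Two real prefix sums coincide'' does \emph{not} imply that the segment between them is free of auxiliary closing steps; hence your $C$ (defined as the real steps strictly between the two matched positions) need not sum to~$0$, and then $A\delta\neq 0$. Your sentence ``comfortably more than the lattice-point count plus the few auxiliary positions'' is the step that does not go through: subtracting $m$ positions from the pigeonhole does not force the repeated value to occur inside a single auxiliary-free stretch. The clean repair is to argue block-wise. With at most $m$ auxiliary vectors (either your integer closers, or more naturally the $\le m$ fractional residues $\{y^{\pm}_i\}A_i$ themselves), the Steinitz ordering is cut into at most $m+1$ blocks, and within each block consecutive prefix sums differ by an integer column, so they lie in a fixed coset of $\mathbb Z^m$ intersected with the Steinitz box --- at most $(2m\Delta+1)^m$ values per block. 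A within-block repeat then gives a genuinely auxiliary-free cycle~$C$, and the rest of your argument applies verbatim. As written this yields $\lVert z^*-x^*\rVert_1\le (m+1)(2m\Delta+1)^m$, i.e.\ the right shape but with leading factor $m+1$ rather than $m$; squeezing out that last unit requires an extra bookkeeping observation from the original proof (or one simply accepts the $O(m)(2m\Delta+1)^m$ bound, which is all the present paper needs).
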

    We use the theorem to bound the value of $K$ at the expense of
    computing the optimum of the fractional relaxation. This follows a similar approach as used in~\cite{DBLP:journals/talg/EisenbrandW20}.
    Note that $z^*_i \ge \ell_i := \max\{0, \lceil x^*_i \rceil - m(2m\Delta + 1)^m\}$.
    By setting $y = x - \ell$ we obtain
    the equivalent ILP $\max\{c^T y : Ay = b - A\ell, y\in\mathbb Z_{\ge 0}^n\}$.
    It suffices to find an optimal solution to it. Notice that $z^* - \ell$
    is optimal for this ILP and we can bound
    \begin{equation*}
    \lVert z^* - \ell \rVert_1 \le \lVert z^* - x^* \rVert_1 + \lVert \ell - x^* \rVert_1 \le m(2m\Delta + 1)^m + m^2 (2m\Delta + 1)^m = O(m\Delta)^m .
    \end{equation*}
    Here, we use that $x^*$ and $\ell$ can only differ in the $m$ many non-zero components of $x^*$ and in those by at most $m(2m\Delta + 1)^m$.
    Also, note that the O-notation hides polynomial terms in $m$.
    Using $K = O(m\Delta)^m$, $H\le O(\sqrt{m} \Delta)$ and the 
    $O(m)^{m/2} \log^{3m}(m) \cdot n$ time
    algorithm~\cite{DBLP:journals/talg/CHAN18} for solving the
    relaxation, we derive a running time of
    \begin{multline}
        O(m)^{m/2} \log^{3m}(m) \cdot n + O(H)^{2m} \cdot \frac{\log(K)}{\log(\Delta)} \\
        \le O(\sqrt{m})^{m} \log^{3m}(m) \cdot O(\Delta)^m + O(\sqrt{m} \Delta)^{2m} \cdot \frac{m\log(m\Delta)}{\log(\Delta)} \le O(\sqrt{m}\Delta)^{2m} .
    \end{multline}
    Similarly, we can improve the running time for finding any feasible
    solution to
    \begin{multline}
        O(m)^{m/2} \log^{3m}(m) \cdot n + O(H)^{m} \cdot \log(\Delta) \log(K) \\
        \le O(\sqrt{m})^{m} \log^{3m}(m) \cdot O(\Delta)^m + O(\sqrt{m} \Delta)^{m} \log(\Delta)\cdot m\log(m\Delta)
        \le O(\sqrt{m}\Delta)^{(1 + o(1))m} .\label{eq:rt-feas-prox}
    \end{multline}
    This proves the running times (\ref{eq:opt}) and (\ref{eq:feas})
    in the case that the ILP is bounded. Testing whether
    an ILP is unbounded can be done without increasing the asymptotic
    running time, as we will lay out next.
    
    \subsection{Unbounded solutions}\label{sec-unbounded}
      The ILP $\max\{c^Tx : Ax = b, x\in\mathbb Z_{\ge 0}^n\}$ is unbounded, if and only if it is feasible and
      $\max\{c^T x : Ax = 0, x\in\mathbb Z_{\ge 0}^n\}$ has a solution with positive value. The former can be checked with our algorithm,
      hence it remains to check if the latter condition holds.
      We can simply solve the LP relaxation for this. If there is
      a fractional solution with positive value, there is also an
      integral one. This is because by Cramer's rule there
      exists a fractional solution with denominators $\det(A)$,
      hence multiplying by $\det(A)$ yields an integral solution.
      
  \subsection{Heterogeneous rows}
    Let $\Delta_1,\dotsc,\Delta_m\le \Delta$ denote the largest absolute
    values of each row in $A$. When some of these values are much smaller
    than $\Delta$, the maximum among all,
    we can do better than $O(\sqrt m\Delta)^{2m}$.
    Define $A' = \mathrm{diag}(\Delta_1^{-1},\dotsc,\Delta_m^{-1}) \cdot A$, where
\begin{equation*}
  \mathrm{diag}(\Delta_1^{-1},\dotsc,\Delta_m^{-1}) =
      \begin{pmatrix}
        \Delta_1^{-1} & & 0 \\
            & \ddots & \\
        0   & & \Delta_m^{-1}
      \end{pmatrix} .
\end{equation*}
    We claim that in the dynamic program a table of size $\prod_{k=1}^m O(H' \Delta_k)$ suffices, where $H' \ge \mathrm{herdisc}(A')$.
Clearly, the ILP $\max\{c^T x, Ax = b, x\in\mathbb Z_{\ge 0}^n\}$ is equivalent to
\begin{equation*}
  \max\{c^T x, A' x = b', x\in\mathbb Z_{\ge 0}^n\} ,
\end{equation*}
    where $b' = \mathrm{diag}(\Delta_1^{-1},\dotsc,\Delta_m^{-1}) \cdot b$.
    At first glance, our algorithm cannot be applied to this problem, since the entries are
    not integral.
    However, in the algorithm we only use the fact that the number of points
    $Ax$ with $x\in\mathbb Z_{\ge 0}^n$ close to some point $b''$, that is,
    with $\lVert Ax - b'' \rVert_\infty \le 4H$, is small and can be enumerated.
    The points $A' x$ with $x\in\mathbb Z_{\ge 0}^n$ and $\lVert A' x - b'' \rVert_\infty \le 4H'$
    are exactly those with $| (A x)_k - b''_k \cdot \Delta_k | \le 4H' \cdot \Delta_k$ for all $k$.
    These are $\prod_{k=1}^m O(H' \Delta_k)$ many and they can be enumerated.
    This way, we get a running time of $\prod_{k=1}^m O(H'\Delta_k)^2$,
    which using the bound from Theorem~\ref{th:spencer} yields
  \begin{equation}
    \prod_{k=1}^m O(m \Delta_k^2) .
  \end{equation}
    If one is only interested in a feasible solution, then this improves to
    \begin{equation}
        O(\sqrt{m})^{(1 + o(1))m} \cdot \prod_{k=1}^m [m \Delta_k] \cdot \log^2(\Delta) .
    \end{equation}
  \section{Lower bounds}
  In this section we give conditional lower bounds that match
  the running time of our algorithm both for finding an optimal solution
  and for finding a feasible solution.
  \subsection{Optimization problem}\label{sec-lb-opt}
  We use an equivalence between the problems Unbounded Knapsack and (min,~+)-convolution
  regarding sub-quadratic algorithms.
  \begin{definition}[Unbounded Knapsack]\label{def:unbounded-knapsack}
  Given $C\in\mathbb N$, $w_1,\dotsc,w_n\in \mathbb N$, and $p_1,\dotsc,p_n\in\mathbb N$
      find integer multiplicities $x_1,\dotsc,x_n$,
    such that $\sum_{i=1}^n x_i \cdot w_i \le C$ and $\sum_{i=1}^n x_i \cdot p_i$ is maximized.
  \end{definition}
  
  Note that when we instead require $\sum_{i=1}^n x_i \cdot w_i = C$ in the problem above, 
  we can transform it to this form by adding an item of profit zero and weight $1$.
  \begin{theorem}[\cite{DBLP:conf/icalp/CyganMWW17, DBLP:conf/icalp/KunnemannPS17}]
    For any $\delta > 0$ there exists no $O((n + C)^{2-\delta})$ time algorithm for Unbounded Knapsack unless there is a truly sub-quadratic algorithm for (min,~+)-convolution.
  \end{theorem}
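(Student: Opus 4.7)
The plan is to prove the contrapositive: a truly sub-quadratic algorithm for \textsc{Unbounded Knapsack} running in time $O((n+C)^{2-\delta})$ would imply a truly sub-quadratic algorithm for (min,~+)-convolution. Given sequences $r_1,\dotsc,r_n$ and $s_1,\dotsc,s_n$ with non-negative integer entries (after shifting by a constant, which shifts every $t_k$ by a constant that can be subtracted off), I will construct a single \textsc{Unbounded Knapsack} instance with $O(n)$ items and capacity $C = O(n)$, from whose solutions every $t_k$ can be extracted.

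The construction will use two disjoint ``weight bands''. Fix a parameter $M$ strictly larger than $\max_i r_i + \max_j s_j$. For each $i$ introduce an item of weight $n + i$ and profit $M - r_i$; for each $j$, an item of weight $3n + j$ and profit $M - s_j$; and add a single filler item of weight $1$ and profit $0$. For each target $k \in \{2,\dotsc,2n\}$, set $C_k = k + 4n$. A case analysis on the number of items chosen from each band should establish that (i) two items from the high band overshoot $C_k$, (ii) zero items from the high band strictly underachieves the profit of one pair, and (iii) extra copies of low-band items cost more weight (which must be balanced by dropping the high-band item) than they gain in profit, because each band item contributes roughly $M$. Consequently the optimum at capacity $C_k$ will consist of one low-band item $a_i$ and one high-band item $b_j$ with $i + j = k$, padded with fillers, achieving profit $2M - (r_i + s_j)$. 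Maximizing over valid pairs yields $2M - t_k$.

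To recover all $t_k$ from a single call to the hypothetical algorithm, I exploit the standard fact that pseudo-polynomial Knapsack solvers produce the optimum for every intermediate capacity as a by-product of the same dynamic program. Setting $C = 6n$ and reading the output at each capacity $C_k$ then delivers all $t_k$ within $O(n^{2-\delta})$ time. If one is pedantic about whether a hypothetical sub-quadratic algorithm must have this tabular structure, one can instead batch the convolution queries into one instance via a capacity dilation, or invoke a self-reduction of \textsc{Unbounded Knapsack} that recovers all intermediate optima from a single optimum oracle with only linear overhead.

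The hard part will be the case analysis in the second paragraph: verifying that no alternative item configuration achieves profit at least $2M - t_k$ at capacity $C_k$. Once the weight offsets $n$ and $3n$ and a sufficiently large $M$ are fixed, this reduces to a handful of arithmetic inequalities; calibrating the parameters simultaneously for every target capacity and every sequence of $r$'s and $s$'s is the only genuinely delicate step. With these inequalities verified, the implication ``truly sub-quadratic \textsc{Unbounded Knapsack} $\Rightarrow$ truly sub-quadratic (min,~+)-convolution'' is complete, proving the theorem.
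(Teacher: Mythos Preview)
The paper does not prove this theorem; it is quoted from \cite{DBLP:conf/icalp/CyganMWW17, DBLP:conf/icalp/KunnemannPS17} and used as a black box. So there is no ``paper's own proof'' to compare against, and I will simply assess your argument on its merits.

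Your construction has a genuine gap that breaks it for \textsc{Unbounded Knapsack}. Because multiplicities are unbounded, nothing prevents the solver from taking many copies of a single low-band item. The lightest low-band item has weight $n+1$ and profit $M - r_1$; at capacity $C_k \in [4n+2, 6n]$ one can pack roughly $5$ copies and earn profit about $5M$, which dwarfs the $2M - t_k$ you want to be optimal. Even if you restrict attention to distinct items, three or four different low-band items fit (total weight $\le 4n$) and already give profit $\approx 3M$ or $4M$. Your case (ii) (``zero items from the high band strictly underachieves'') is therefore false, and case (iii) does not apply because replacing a high-band item by extra low-band items \emph{decreases} total weight rather than increasing it. The actual reductions in \cite{DBLP:conf/icalp/CyganMWW17, DBLP:conf/icalp/KunnemannPS17} go through intermediate problems (e.g.\ a decision/verification variant of convolution) precisely to avoid this difficulty; a direct two-band gadget of this kind does not survive the ``unbounded'' in \textsc{Unbounded Knapsack}.

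There is a second gap in the extraction step. The hypothesis is merely that \emph{some} algorithm solves one \textsc{Unbounded Knapsack} instance in time $O((n+C)^{2-\delta})$; it need not be a dynamic program and need not output optima for all intermediate capacities. Your fallback suggestions (``capacity dilation'', a ``self-reduction that recovers all intermediate optima with linear overhead'') are exactly the nontrivial part of the cited equivalence results and cannot be waved through: calling the oracle once per target $k$ costs $n \cdot O(n^{2-\delta}) = O(n^{3-\delta})$, which is not sub-quadratic. You would need the output-sensitive reductions from \cite{DBLP:conf/icalp/CyganMWW17, DBLP:conf/icalp/KunnemannPS17} to close this.
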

  When using this theorem, we assume that the input already consists of
  the at most $C$ relevant items only, $n\le C$, and $w_i\le C$ for all $i$.
  This preprocessing can be done in time $O(n + C)$.
  \begin{theorem}\label{th-lower-bound}
    Let $m \in \mathbb N$. For any $\delta > 0$ and any computable function $f$
    there does not exist an algorithm that solves
    ILPs with $m$ constraints in time
    $f(m) \cdot (n^{2-\delta} + (\Delta+ \lVert b \rVert_\infty)^{2m - \delta})$,
    unless there exists a truly sub-quadratic algorithm for (min,~+)-convolution.
  \end{theorem}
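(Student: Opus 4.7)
The plan is to reduce \textsc{Unbounded Knapsack} to an ILP with exactly $m$ equality constraints, so that an algorithm of the forbidden form would yield an $O((n+C)^{2-\delta/m})$-time algorithm for the knapsack problem, contradicting the cited (min,~+)-convolution-based lower bound.

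Starting from an equality-form instance (which has the same hardness as the inequality form, as observed above) with $\sum_i x_i w_i = C$, $n \le C$, and $w_i \le C$, I would set $B := \lceil C^{1/m}\rceil$ and expand both weights and capacity in base $B$:
\begin{equation*}
w_i = \sum_{k=0}^{m-1} a_{ik} B^k, \qquad C = \sum_{k=0}^{m-1} b_k B^k,
\end{equation*}
with all digits in $\{0,\dots,B-1\}$. The reduction introduces $m-1$ auxiliary carry variables $y_0,\dots,y_{m-2} \in \mathbb Z_{\ge 0}$ and forms the $m$ equality constraints (with $y_{-1} := 0$ and $y_{m-1} := 0$)
\begin{equation*}
\sum_{i=1}^n a_{ik} x_i + y_{k-1} - B\, y_k = b_k, \qquad k = 0,\dots,m-1.
\end{equation*}
The objective is $c_i = p_i$ on the $x$-columns and $0$ on the $y$-columns. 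All matrix entries have absolute value at most $B$, $\lVert b \rVert_\infty < B$, and there are $n + m - 1$ variables, so $\Delta + \lVert b \rVert_\infty = O(C^{1/m})$.

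The technical heart of the argument, and what I expect to be the main obstacle, is the bidirectional equivalence between knapsack and ILP solutions. One direction is easy: multiplying row $k$ by $B^k$ and summing telescopes the carry terms to $\sum_i w_i x_i = C$, so every ILP solution projects to a knapsack solution of the same value. For the converse, given a knapsack solution $x$, I would define $T_k := \sum_i a_{ik} x_i$ and recursively set $y_k := (T_k + y_{k-1} - b_k)/B$, and must show each $y_k$ is a non-negative integer. Reducing the identity $\sum_j T_j B^j = \sum_j b_j B^j$ modulo $B^{k+1}$ and substituting the already-verified relations $B y_j = T_j + y_{j-1} - b_j$ for $j < k$ yields $T_k + y_{k-1} \equiv b_k \pmod{B}$; since the left-hand side is non-negative and $b_k < B$, this forces $T_k + y_{k-1} \ge b_k$ and hence $y_k \ge 0$ integer. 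A final telescoping forces $y_{m-1} = 0$, so the last row of the ILP is automatically satisfied.

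Finally, plugging the reduced instance into a hypothetical algorithm of running time $f(m) \cdot (n^{2-\delta} + (\Delta + \lVert b \rVert_\infty)^{2m-\delta})$ would solve the knapsack instance in time
\begin{equation*}
f(m) \cdot \bigl((n + m)^{2-\delta} + O(C^{1/m})^{2m-\delta}\bigr) = O\bigl((n + C)^{2-\delta/m}\bigr)
\end{equation*}
for every fixed $m$, contradicting the $(n+C)^{2-\delta}$-hardness of \textsc{Unbounded Knapsack} and hence the (min,~+)-convolution conjecture.
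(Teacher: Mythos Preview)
Your proposal is correct and follows essentially the same route as the paper: reduce \textsc{Unbounded Knapsack} to an $m$-constraint ILP by writing weights and capacity in base $B \approx C^{1/m}$ with carry variables, prove the two-way equivalence via telescoping and a modular/non-negativity argument, and conclude that the hypothetical ILP algorithm would solve knapsack in $O((n+C)^{2-\delta/m})$ time. The only cosmetic difference is indexing, and one tiny edge case: with $B=\lceil C^{1/m}\rceil$ the expansion $C=\sum_{k=0}^{m-1} b_k B^k$ fails when $C$ is a perfect $m$-th power (then $C=B^m$ needs $m+1$ digits); the paper's choice $\lfloor C^{1/m}\rfloor+1$ avoids this, and you can fix it the same way.
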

  \begin{proof}{Proof.}
    Let $\delta > 0$ and $m\in\mathbb N$.
    Assume that there exists an algorithm that solves ILPs of the form $\max \{c^T x : Ax = b, x\in\mathbb Z_{\ge 0}^n\}$ where $A\in \mathbb Z^{m\times n}$, $b\in\mathbb Z^m$, and $c\in\mathbb Z^n$ in time
    $f(m) \cdot (n^{2-\delta} + (\Delta+ \lVert b \rVert_\infty)^{2m - \delta})$,
    where $\Delta$ is the greatest absolute value in $A$.
    We will show that this implies an $O((n+C)^{2 - \delta'})$ time algorithm for 
    Unbounded Knapsack for some $\delta' > 0$.
    Let $(C, (w_i)_{i=1}^n, (p_i)_{i=1}^n)$ be an instance of this problem.
    Let us first observe that the claim holds for $m=1$. Clearly
    Unbounded Knapsack (with equality) can be written
    as the following ILP.
    \begin{align*}
      \max \sum_{i=1}^n & p_i\cdot x_i  \\
      \sum_{i=1}^n w_i \cdot x_i &= C \tag{UKS1} \\
      x &\in \mathbb Z_{\ge 0}^n
    \end{align*}
    Since $w_i\le C$ for all $i$ (otherwise the item can be discarded), we can solve this ILP
    by assumption in time
    $f(1) \cdot (n^{2-\delta} + (2 C)^{2-\delta}) \le O((n + C)^{2-\delta})$.
    Now consider the case where $m > 1$. We want to reduce $\Delta$ by exploiting the additional
    rows. Let $\Delta = \lfloor C^{1/m} \rfloor + 1 > C^{1/m}$. We write $C$ 
    in base-$\Delta$ notation, that is,
    \begin{equation*}
      C = C^{(0)} + \Delta C^{(1)} + \cdots + \Delta^{m-1} C^{(m-1)} ,
    \end{equation*}
    where $0\le C^{(k)} < \Delta$ for all $k$.
    Likewise, write
    $w_i = w_i^{(0)} + \Delta w_i^{(1)} + \cdots + \Delta^{m-1} w_i^{(m-1)}$
    with $0 \le w_i^{(k)} < \Delta$ for all $k$.
    We claim that (UKS1) is equivalent to the following ILP.
    \begin{align}
      \max \sum_{i=1}^n p_i\cdot x_i & \notag\\
      \sum_{i=1}^n [w^{(0)}_i \cdot x_i] - \Delta \cdot y_1 &= C^{(0)} \label{UKSm-1}\\
      \sum_{i=1}^n [w^{(1)}_i \cdot x_i] + y_1 - \Delta \cdot y_2 &= C^{(1)} \label{UKSm-2}\\
       &\vdots \tag{UKSm} \\
      \sum_{i=1}^n [w^{(m-2)}_i \cdot x_i] + y_{m-2} - \Delta \cdot y_{m-1} &= C^{(m-2)} \label{UKSm-3}\\
      \sum_{i=1}^n [w^{(m-1)}_i \cdot x_i] + y_{m-1} &= C^{(m-1)} \label{UKSm-4}\\
      x \in \mathbb Z_{\ge 0}^n & \notag \\
      y \in \mathbb Z_{\ge 0}^{m-1} & \notag
    \end{align}
    \paragraph*{Implication $x\in (\mathrm{USK1}) \Rightarrow x\in (\mathrm{USKm})$.}
    Let $x$ be a solution to (UKS1).
    Then for all $1\le\ell\le m$,
    \begin{equation*}
      \sum_{i=1}^n \sum_{k=0}^{\ell-1} \Delta^k w_i^{(k)} \cdot x_i \equiv \sum_{i=1}^n w_i \cdot x_i \equiv C \equiv \sum_{k=0}^{\ell-1}\Delta^k C^{(k)}\mod \Delta^{\ell} .
    \end{equation*}
    This is because all $\Delta^\ell w_i^{(\ell)},\dotsc, \Delta^{m-1} w_i^{(m-1)}$ and $\Delta^\ell C^{(\ell)},\dotsc,\Delta^{m-1} C^{(m-1)}$
    are multiples of $\Delta^\ell$.
    It follows that there exists an $y_\ell\in \mathbb Z$ such that 
    \begin{equation*}
  \sum_{i=1}^n \sum_{k=0}^{\ell-1} [\Delta^k w_i^{(k)} \cdot x_i] - \Delta^\ell\cdot y_\ell
  = \sum_{k=0}^{\ell-1} \Delta^k C^{(k)} .
    \end{equation*}
    Furthermore, $y_\ell$ is non-negative, because otherwise
    \begin{multline*}
    \sum_{k=0}^{\ell-1} \Delta^k C^{(k)}
    \le \sum_{k=0}^{\ell-1} \Delta^k (\Delta - 1)
    < \Delta^{\ell-1} (\Delta - 1)\sum_{k=0}^{\infty} \Delta^{- k} \\
    = \Delta^{\ell-1} \frac{\Delta - 1}{1 - \frac 1 \Delta} = \Delta^\ell
    \le - \Delta^\ell y_\ell
    \le \sum_{i=1}^n \sum_{k=0}^{\ell-1} [\Delta^k w_i^{(k)} \cdot x_i]
    - \Delta^\ell y_\ell .
    \end{multline*}
    We choose $y_1,\dotsc,y_{m}$ exactly like this.
    The first constraint (\ref{UKSm-1}) follows directly. Now let 
    $\ell\in\{2,\dotsc, m\}$. By choice of $y_{\ell-1}$ and $y_\ell$ we
    have that
    \begin{equation}
      \sum_{i=1}^n \bigg[ \underbrace{\left(\sum_{k=0}^{\ell-1} \Delta^k w_i^{(k)} - \sum_{k=0}^{\ell-2} \Delta^k w_i^{(k)}\right)}_{= \Delta^{\ell-1} w^{(\ell-1)}_i} \cdot x_i \bigg]
      + \Delta^{\ell-1} \cdot y_{\ell - 1} - \Delta^\ell \cdot y_\ell
      = \underbrace{\sum_{k=0}^{\ell-1} \Delta^k C^{(k)} - \sum_{k=0}^{\ell-2} \Delta^k C^{(k)}}_{=\Delta^{\ell-1} C^{(\ell-1)}} .\label{constr-ell2}
    \end{equation}
    Dividing both sides by $\Delta^{\ell-1}$ we get every constraint (\ref{UKSm-2}) - (\ref{UKSm-3}) for the correct choice of $\ell$.
    Finally, consider the special case
    of the last constraint (\ref{UKSm-4}). By choice of $y_m$ we have that
    \begin{equation*}
  \sum_{i=1}^n \underbrace{\sum_{k=0}^{m-1} \Delta^k w_i^{(k)}}_{=w_i} \cdot x_i - \Delta^m\cdot y_m = \underbrace{\sum_{k=0}^{m-1} \Delta^k C^{(k)}}_{=C} .
    \end{equation*}
    Thus, $y_m = 0$ and (\ref{constr-ell2}) implies the last constraint (with $\ell = m$).

    \paragraph*{Implication $x\in (\mathrm{USKm}) \Rightarrow x\in (\mathrm{USK1})$.}
    Let $x_1,\dotsc,x_n, y_1,\dotsc,y_{m-1}$ be a solution to (UKSm) and set $y_m = 0$.
    We show by induction that for all $\ell\in\{1,\dotsc,m\}$ it holds that
    \begin{equation*}
      \sum_{i=1}^n \sum_{k=0}^{\ell-1}\Delta^k w_i^{(k)} \cdot x_i - \Delta^\ell y_\ell = \sum_{k=0}^{\ell-1}\Delta^k C^{(k)} .
    \end{equation*}
    With $\ell=m$ this implies the claim as $y_m = 0$ by definition.
    For $\ell = 1$ the equation is exactly the first constraint (\ref{UKSm-1}). Now let $\ell > 1$ and
    assume that the equation above holds. We will show that it also holds for $\ell+1$. From
    (USKm) we have
    \begin{equation*}
      \sum_{i=1}^n [ w_i^{(\ell)} \cdot x_i ]
       + y_{\ell} - \Delta \cdot y_{\ell + 1} = C^{(\ell)} .
    \end{equation*}
    Multiplying each side by $\Delta^{\ell}$ we get
    \begin{equation*}
      \sum_{i=1}^n [ \Delta^{\ell} w_i^{(\ell)} \cdot x_i ]
       + \Delta^{\ell} y_{\ell} - \Delta^{\ell+1} \cdot y_{\ell + 1} = \Delta^{\ell} C^{(\ell)} .
    \end{equation*}
    By adding and subtracting the same elements, it follows that
    \begin{equation*}
      \sum_{i=1}^n \left[ \bigg( \sum_{k=0}^{\ell} \Delta^k w_i^{(k)} -  \sum_{k=0}^{\ell-1} \Delta^k w_i^{(k)} \bigg) \cdot x_i \right]
       + \Delta^{\ell} \cdot y_{\ell} - \Delta^{\ell+1} \cdot y_{\ell + 1}
  = \sum_{k=0}^{\ell}\Delta^{k} C^{(k)} - \sum_{k=0}^{\ell-1}\Delta^{k} C^{(k)} .
    \end{equation*}
    By inserting the induction hypothesis we conclude
    \begin{equation*}
      \sum_{i=1}^n \sum_{k=0}^{\ell} [ \Delta^k w_i^{(k)} \cdot x_i ]
       - \Delta^{\ell+1} y_{\ell + 1} = \sum_{k=0}^{\ell}\Delta^{k} C^{(k)} .
    \end{equation*}
    \paragraph*{Constructing and solving the ILP.}
    The ILP (UKSm) can be constructed easily in $O(Cm + nm) \le O((n + C)^{2-\delta/m})$ operations (recall that $m$ is a constant).
    We obtain $\Delta = \lfloor C^{1/m} \rfloor + 1$ by guessing:
    More precisely, we iterate over all numbers $\Delta_0\le C$ and
    find the one where $(\Delta_0-1)^m < C \le \Delta_0^m$.
    Although there are
    more efficient, non-trivial ways to compute the rounded $m$-th root,
    this is not required here.
    The base-$\Delta$ representation for $w_1,\dotsc,w_n$ and $C$ can be computed with $O(m)$ operations
    for each of these numbers.

    All entries of the matrix in (UKSm) and the right-hand side are bounded by $\Delta = O(C^{1/m})$. Therefore, by assumption this ILP can be solved in time
    \begin{equation*}
  f(m) \cdot (n^{2-\delta} + O(C^{1/m})^{2m - \delta})
   \le f(m) \cdot O(1)^{2m-\delta} \cdot (n + C)^{2 - \delta / m} = O( (n + C)^{2 - \delta/m}) .
    \end{equation*}
    This yields a truly sub-quadratic algorithm for Unbounded Knapsack.   \end{proof}
  \subsection{Feasibility problem}
   We will show that our algorithm for solving feasibility of ILPs is optimal (except for sub-polynomial improvements).
   We use a recently discovered lower bound for k-SUM based on the SETH.
   \begin{definition}[k-SUM]
      Given $T\in \mathbb N_0$ and
        $Z_1,\dotsc, Z_k \subset \mathbb N_0$ where $|Z_1| + |Z_2| + \cdots + |Z_k| = n\in\mathbb N$
      find $z_1\in Z_1, z_2\in Z_2, \dotsc, z_k\in Z_k$ such that
        $z_1 + z_2 + \cdots + z_k = T$.
   \end{definition}

  \begin{theorem}[\cite{DBLP:conf/soda/AbboudBHS19}]\label{SETH-kSUM}
    If the SETH holds,
    then for every $\delta > 0$ there exists a value $\gamma > 0$ such that k-SUM cannot be solved in time
    $O(T^{1 - \delta} \cdot n^{\gamma k})$.
  \end{theorem}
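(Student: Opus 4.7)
The plan is to establish the bound via the classical \emph{split-and-list} reduction from $k'$-SAT to \textsc{k-SUM}, choosing the parameter $k$ of the \textsc{k-SUM} instance and the width $k'$ of the SAT instance in response to $\delta$. I would proceed in four steps. First, invoke the Sparsification Lemma to assume without loss of generality that the $k'$-SAT input has $N$ variables and $M = c_{k'} \cdot N$ clauses for a constant $c_{k'}$ depending only on $k'$. Partition the variables into $k-1$ equal groups $V_1, \ldots, V_{k-1}$ of size $N/(k-1)$. Second, for each group $V_i$ and each of the $2^{N/(k-1)}$ partial assignments $\alpha$ to $V_i$, construct a number $z_{i,\alpha} \in Z_i$ whose base-$B$ digits encode (a) a unique group identifier in a position reserved for $i$ and (b) a $0/1$ flag per clause recording whether $\alpha$ satisfies that clause. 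The base $B$ is picked at least $k+1$ so that digit-wise summation of $k$ numbers never carries.

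Third, I would introduce a $k$-th ``slack'' set $Z_k$ containing numbers whose only nonzero digits sit in the clause positions, chosen so that any digit value in $\{0,1,\ldots,k-1\}$ produced by the real groups can be padded up to a common fixed value. Setting the target $T$ to have this common value in every clause digit and $1$ in every group-identifier digit, the equation $z_1 + \cdots + z_k = T$ is then solvable exactly when one $z$ is picked per real group, one slack element is chosen, and every clause has at least one group witness satisfying it. The sizes work out to $n = O(k \cdot 2^{N/(k-1)})$ and $T = B^{O(M + k)} = 2^{O_{k'}(N)}$.

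Fourth, a hypothetical $O(T^{1-\delta} \cdot n^{\gamma k})$ algorithm yields a $k'$-SAT solver running in time $2^{O_{k'}(N)(1-\delta) + \gamma N + O(k \log k)}$. By picking $\gamma$ small enough as a function of $\delta$ and the sparsification constant $c_{k'}$, then picking $k'$ large enough so that SETH rules out a $2^{(1-\delta')N}$-time $k'$-SAT algorithm for the resulting $\delta' > 0$, we obtain a contradiction.

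The hard part is the slack construction. One has to simulate the threshold requirement ``each clause is covered by \emph{at least one} of the $k-1$ real groups'' using the exact-sum constraint of \textsc{k-SUM}, without paying either a multiplicative factor of $M$ in $n$ (which would destroy the dependence on $n^{\gamma k}$) or an extra additive $\Theta(N)$ in $\log T$ (which would weaken the tradeoff between the two exponents enough to forfeit the SETH contradiction). Getting this tradeoff tight is precisely the content of the Abboud--Bringmann--Hermelin--Shichkin argument and determines the exact dependence of $\gamma$ on $\delta$.
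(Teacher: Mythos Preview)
The paper does not prove this theorem at all: it is quoted verbatim as a result of Abboud, Bringmann, Hermelin, and Shabtay~\cite{DBLP:journals/corr/AbboudBHS17} and used as a black box in the proof of Theorem~\ref{th-lower-bound2}. There is therefore no ``paper's own proof'' to compare your proposal against; the authors simply cite the reference and move on.

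Your sketch is a reasonable outline of the split-and-list reduction that underlies the cited result, and you correctly flag the delicate point---controlling the size of $T$ so that the $(1-\delta)$ savings survives the encoding overhead. One caution: the version you wrote, with sparsification giving $M = c_{k'} N$ and then $\log T = \Theta(M\log k)$, does not obviously close. The exponent on $2$ in your time bound becomes roughly $c_{k'}(1-\delta)\log k + \gamma$, and since $c_{k'}$ grows with $k'$ there is no reason this drops below $1$. The Abboud--Bringmann--Hermelin--Shabtay argument handles this via a more careful averaging/packing of clauses into digits rather than a naive one-digit-per-clause encoding, which is exactly the subtlety you allude to in your final paragraph. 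If you intend to reproduce the proof rather than cite it, that step needs to be spelled out; if you are happy to cite it (as the paper does), then no proof is required here.
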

  This implies that for every $p\in\mathbb N$ there is no $O(T^{1-\delta} \cdot n^p)$ time algorithm
  for k-SUM if $k \ge p/\gamma$.
  \begin{theorem}\label{th-lower-bound2}
    Let $m\in \mathbb N$.
    If the SETH holds, then for every $\delta > 0$ and every computable function $f$,
    there does not exist an algorithm that solves
    feasibility of ILPs with $m$ constraints in time $n^{f(m)} \cdot (\Delta + \lVert b \rVert_\infty)^{m - \delta}$.
  \end{theorem}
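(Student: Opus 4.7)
The plan is to contradict the alleged algorithm by reducing $k$-SUM, for a sufficiently large constant $k$, to $m$-constraint ILP feasibility. Given $m$, $f$, and $\delta$, I would first apply Theorem~\ref{SETH-kSUM} with parameter $\delta/m$ to obtain some $\gamma > 0$ and fix $k := \lceil f(m)/\gamma\rceil + 1$. The aim is then to transform any $k$-SUM instance (with target $T$, $n$ elements, each WLOG of value $\le T$) into an $m$-constraint ILP with $O(n)$ variables and $\Delta + \lVert b \rVert_\infty = O(T^{1/m})$. The hypothesized algorithm then solves $k$-SUM in time $n^{f(m)} \cdot O(T^{1-\delta/m})$, which by our choice of $k$ violates Theorem~\ref{SETH-kSUM}.

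The heart of the reduction is to pack the $k+2$ natural constraints of $k$-SUM --- the $k$ ``pick one from each set'' conditions, ``total count equals $k$'', and ``sum of values equals $T$'' --- into a single equality
\begin{equation*}
 \sum_{i,j} x_{i,j}\bigl(z_{i,j} + E(B^i + D)\bigr) = T + E\Bigl(\sum_{i=1}^k B^i + Dk\Bigr),
\end{equation*}
with the constants $B := k$, $D := k B^k$, and $E := kT+1$. Correctness follows from three layered-magnitude arguments, where I write $W := \sum x_{i,j} z_{i,j}$, $S := \sum x_{i,j}$, and $y_i := \sum_j x_{i,j}$: the choice $D \ge k B^k$ together with $\sum_i y_i B^i \in [0, S B^k]$ forces $S = k$ (otherwise $W$ would be driven either strictly negative or strictly above $kT$, each impossible); then $W = T + E(\sum_{i=1}^k B^i - \sum_i y_i B^i)$ combined with $|W - T| \le kT < E$ pins $W = T$ and $\sum_i y_i B^i = \sum_{i=1}^k B^i$; finally $\sum_i y_i B^i = \sum_{i=1}^k B^i$ with $\sum_i y_i = k$ and $y_i \ge 0$ forces $y_i = 1$ for every $i$ via uniqueness of signed base-$B$ representations of zero (valid since $|y_i - 1| \le k - 1 < B$).

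Because $B$ and $D$ are constants in $k$ and $E = O(T)$, every coefficient and right-hand side entry in this single equation is $O(T)$. I would then apply verbatim the base-$\Delta'$ encoding from the proof of Theorem~\ref{th-lower-bound} (with $\Delta' := \lfloor O(T)^{1/m}\rfloor + 1$) to split it into $m$ equations over $n + m - 1$ variables, with every matrix and right-hand side entry in $O(T^{1/m})$. The hypothesized algorithm now solves the original $k$-SUM instance in time $n^{f(m)} \cdot O(T^{1/m})^{m - \delta} = O(n^{f(m)} T^{1 - \delta/m})$; since $f(m) < \gamma k$, this contradicts Theorem~\ref{SETH-kSUM} applied with parameter $\delta/m$.

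The main obstacle is the single-equation packing. Every separation constant one introduces inflates the coefficients by its own magnitude, and the construction only works because $B$ can be taken as small as $k$ (the minimum for signed base-$B$ uniqueness), $D = k B^k$ is a constant in $k$, and $E$ need only exceed $kT$. Any looser choice --- for example, forcing $E = \Omega(T^2)$ to enable stronger structural separation --- would cause the final base-$\Delta'$ split to produce $\Omega(T^{2/m})$ rather than $O(T^{1/m})$, and the resulting running time would be no better than $T$, so the SETH contradiction would fail.
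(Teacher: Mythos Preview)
Your proposal is correct and follows the same overall strategy as the paper: reduce $k$-SUM (for a suitably large constant $k$) to a single linear equation with $O(T)$-size coefficients, then apply the base-$\Delta'$ splitting from Theorem~\ref{th-lower-bound} to obtain $m$ constraints with entries $O(T^{1/m})$.

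The only substantive difference is the packing step. The paper encodes the three pieces of information (total count, per-set count, original value) as three disjoint bit-fields separated by $\lceil\log k\rceil$ zero buffers, and argues non-interference because at most $k$ summands cannot carry past a buffer. You instead use a layered-magnitude encoding with explicit scaling constants $B=k$, $D=kB^k$, $E=kT+1$, and argue correctness via three successive magnitude-separation steps. These are two presentations of the same idea; your version is slightly more algebraic and makes the constants explicit, while the paper's bit-field picture is more visual. Both yield coefficients that are $O(T)$ times a constant depending only on $k$, which is all that is needed.
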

  \begin{proof}{Proof.}
    Like in the previous reduction we start with the case of $m=1$. For higher values of $m$ the
    result can be shown in the same way as before.

    Suppose there exists an algorithm for solving feasibility of ILPs with one constraint
    in time $n^{f(1)} \cdot (\Delta + \lVert b \rVert_\infty)^{1-\delta}$ for some $\delta > 0$ and $f(1)\in\mathbb N$.
    Let $\gamma$ be the constant given by Theorem~\ref{SETH-kSUM} for this $\delta$ and
    set $k = \lceil f(1)/\gamma\rceil$.
    Now consider an 
    instance $(T, Z_1,\dotsc, Z_k)$ of k-SUM. We will show that this can be solved in
    $O(T^{1-\delta}\cdot n^{f(1)})$, which contradicts the SETH.
    For every $i\le k$ and every $z\in Z_i$ we use a binary variable $x_{i, z}$ that describes whether
    $z$ is used. We can easily model k-SUM as the following ILP:
    \begin{align*}
      \sum_{i=1}^k\sum_{z\in Z_i} z \cdot x_{i, z} &= T & \\
      \sum_{z\in Z_i} x_{i, z} &= 1 &\forall i\in\{1,\dotsc, k\} \\
      x_{i, z} &\in \mathbb Z_{\ge 0} &\forall i\in\{1,\dotsc,k\}, z\in Z_i
    \end{align*}
    However, since we want to reduce to an ILP with one constraint, 
    we need a slightly more sophisticated construction.
    We will show that the cardinality constraints can be encoded into the k-SUM instance 
    by increasing the numbers by a factor of $2^{O(k)}$, which is in $O(1)$
    since $k$ is some constant depending on $f(1)$ and $\gamma$ only.
    We will use this to obtain an ILP with only one
    constraint and values of size at most $O(T)$.
    A similar construction is also used in~\cite{DBLP:conf/soda/AbboudBHS19}.

    Our goal is to construct an instance $(T', Z'_k,\dotsc, Z'_k)$ such that
    for every $x^*$ it holds that 
    $x^*$ is a solution to the first ILP if and only if
    \begin{equation}\label{eq:feas-cond}
        x^*\in \{x : \sum_{i=1}^k\sum_{z\in Z'_i} z \cdot x_{i, z} = T', x\in \mathbb Z_{\ge 0}^n\} .
    \end{equation}
    We will use one element to represent each element in the original instance.
    Consider the binary representation of numbers in $Z'_1\cup\cdots\cup Z'_k$ and of $T'$.
    The numbers in the new instance will consist of three parts and $\lceil\log(k)\rceil$
    many 0s between them to prevent interference.
    For an illustration of the construction see Figure~\ref{fig-construction}.
    The $\lceil\log(k)\rceil$ most significant bits ensure that exactly $k$ elements are selected;
    the middle part are $k$ bits that ensure of every set $Z'_i$ exactly one element is selected;
    the least significant $\lceil \log (T) \rceil$ bits represent the original values of the elements.
    Set the values in the first part of the numbers to $1$ for all elements
    $Z'_1\cup\cdots\cup Z'_k$ and to
    $k$ in $T'$. Clearly this ensures that at most $k$ elements are chosen.
    The sum of at most $k$ elements cannot be larger than $k \le 2^{\lceil\log(k)\rceil}$
    times the biggest element. This implies that the buffers of $\lceil\log(k)\rceil$ zeroes cannot
    overflow and we can consider each of the three parts independently.
    It follows that exactly $k$ elements must be chosen by any feasible solution.
    The system $\{x : \sum_{i=1}^k 2^i x_i = 2^{k+1} - 1, \lVert x \rVert_1 = k, \mathbb Z_{\ge 0}^k\}$
    has exactly one solution and this solution is $(1,1,\dotsc,1)$:
    Consider summing up $k$ powers of $2$ and envision the binary representation
    of the partial sums. When we add some $2^i$ to the partial sum,
    the number of ones in the binary representation increases by one, if the
    $i$'th bit of the current sum is zero. Otherwise, it does not increase.
    However, since in the binary representation of the final sum there are
    $k$ ones, it has to increase in each addition. This means no power of
    two can be added twice and therefore each has to be added exactly once.
    
    It follows that the second part of the numbers
    enforces that of every $Z'_i$ exactly one
    element is chosen.
    We conclude that (\ref{eq:feas-cond}) solves the initial k-SUM instance. By assumption this can be
    done in time $n^{f(1)} \cdot (\Delta + \lVert b \rVert_\infty)^{1 - \delta} = n^{f(1)} \cdot O(T')^{1 - \delta} = O(n^{f(1)} \cdot T^{1 - \delta})$. Here we use that $T' \le 2^{3\log(k) + k + \log(T) + 4} = O(k^3 2^k T) = O(T)$, since $k$ is a constant.

    For $m > 1$ we can use the same construction as in the reduction for the optimization problem:
    Suppose there is an algorithm that finds feasible solutions to ILPs with $m$ constraints in
    time $n^{f(m)} \cdot (\Delta + \lVert b \rVert_\infty)^{m - \delta}$.
    Choose $\gamma$ such that there is no algorithm for k-{SUM} with running time
    $O(T^{1-\delta/m} \cdot n^{\gamma k})$ (under SETH). We set $k = \lceil f(m) / \gamma \rceil$.
    By splitting the one constraint of (\ref{eq:feas-cond}) into $m$ constraints we can reduce the upper bound on
    elements from $O(T)$ to $O(T^{1/m})$.
    This means the assumed running time for solving ILPs can be used to solve k-{SUM} in time
    \begin{equation*}
      n^{f(m)} \cdot O(T^{1/m})^{m - \delta}
      \le n^{\gamma k} O(1)^{m-\delta} T^{1 - \delta / m} = O(n^{\gamma k} T^{1 - \delta / m}) . \qedhere
    \end{equation*}
\begin{figure}
  \begin{align*}
    Z'_i \ni z' &= \overbrace{\underbracket{0 \dots 0 0 0 1}_{\lceil\log(k)\rceil}}^{\bin(1)}
     | \underbracket{0 \dots 0}_{\lceil\log(k)\rceil}
     | \overbrace{\underbracket{0 \dots 0 1 0 \dots 0}_{k}}^{\bin(2^i)}
     | \underbracket{0 \dots 0}_{\lceil\log(k)\rceil} |
    \overbrace{\underbracket{0 1 1 0 \dots}_{\lceil \log(T) \rceil}}^{\bin(z)} \\
    T' &= \overbrace{\underbracket{0 \dots 1 0 1 1}_{\lceil\log(k)\rceil}}^{\bin(k)}
     | \underbracket{0 \dots 0}_{\lceil\log(k)\rceil}
     | \overbrace{\underbracket{1 1 1 1 \dots 1 1 1 1}_{k}}^{\bin(2^{k+1} - 1)}
     | \underbracket{0 \dots 0}_{\lceil\log(k)\rceil} |
    \overbrace{\underbracket{1 0 1 1 \dots}_{\lceil \log(T) \rceil}}^{\bin(T)}
  \end{align*}
  \caption{Construction of $Z'_i$ and $T'$}
  \label{fig-construction}
\end{figure}
  \end{proof}
  \section{Applications}
  In this section we apply our results to some well-known problems,
  which can be formulated using ILPs with few constraints and small entries.
  In particular, we give examples, where the reduction of the running time
  by a factor $n$ improves on the state-of-the-art, the heterogeneity of rows plays a role, and one
  where the use of hereditary discrepancy and the removal of the
  dependence on $\lVert b \rVert_\infty$ are relevant.
  \subsection{Unbounded Knapsack and Unbounded Subset-Sum}
  Recall Definition~\ref{def:unbounded-knapsack}, which introduces the problem Unbounded Knapsack.
  Traditionally, $C$ is only an upper bound on $\sum_{i=1}^n w_i \cdot x_i$ in most of the literature,
  but that variant easily reduces to the problem above by adding a slack variable.
  Unbounded Subset-Sum is the same problem without an objective function, i.e.,
  the problem of finding a multi-set of items whose weights $w_i$ sum up to exactly $C$.
  We assume that no two items have the same weight. Otherwise in time $O(n + \Delta)$ we
  can remove all duplicates by keeping only the most valuable ones.
  This gives algorithms with running time $O(n + \Delta^2)$
  and $O(n + \Delta \log^2(\Delta))$ for Unbounded Knapsack and Unbounded Subset-Sum, respectively, where $\Delta$ is the
  maximum weight among all items (using the results from Section~\ref{sec-proximity}).
  The previously best pseudo-polynomial algorithms for Unbounded Knapsack,
  have running times $O(nC)$ (standard dynamic programming; see e.g.~\cite{DBLP:books/daglib/0010031}),
  $O(n\Delta^2)$~\cite{DBLP:journals/talg/EisenbrandW20},
  or very recently $O(\Delta^2\log(C))$~\cite{axiotis2018capacitated}.
  We note that the last algorithm, which was 
  discovered simultaneously and independently to ours,
  follows a very similar approach to ours when restricted to the Unbounded
  Knapsack case. After our work Chan and He gave an interesting
  improvement, which achieves a running time of $O(n\Delta \log^3(\Delta))$~\cite{DBLP:journals/jcss/ChanH22}. Note that
  $n$ is potentially much smaller than $\Delta$, but not vice versa
  
  For Unbounded Subset-Sum the state-of-the-art
  is a $O(C\log(C))$ time algorithm~\cite{DBLP:conf/soda/Bringmann17}.
  Hence, our algorithm is preferable when
  $\Delta \ll C$. Very recently Klein~\cite{klein2022fine} studied this problem
  and showed the perhaps surprising fact that there is also a pseudo-polynomial algorithm in
  terms of the smallest weight (and not the largest),
  but then the dependence on it is quadratic and cannot be improved
  unless the (min,~+)-convolution conjecture is false.
  \subsection{Change Making}
  In the Change Making problem we are given an infinite supply of coins with
  values $c_1 < c_2 < \cdots < c_n$ and a target t. The goal is to match $t$
  with as few coins as possible. In the decision variant, where we want to find a solution
  with at most $k$ coins, this can be written as finding a solution to the ILP
  \begin{equation*}
      \left\{\sum_{i=1} c_i x_i = t, \sum_{i=1}^n x_i + s = k, x\in \mathbb Z_{\ge 0}, s\in\mathbb Z_{\ge 0}\right\} .
  \end{equation*}
  In other words, this is a feasibility ILP with two rows, where the first row has maximum
  coefficient $c_n$ and the second row has maximum coefficient $1$. Using~(\ref{eq:rt-feas-prox})
  this can be solved in time
  \begin{equation*}
      O(c_n \log^2(c_n)).
  \end{equation*}
  This matches exactly the running time in~\cite{DBLP:journals/jcss/ChanH22}. In fact,
  that algorithm behaves very similar to ours when restricted to this problem.
  \subsection{Scheduling Jobs on Identical Parallel Machines}
  The problem Scheduling Jobs on Identical Parallel Machines asks
  to distribute
  $N$ jobs onto $M \le N$ machines. Each job $j$ has a processing time $p_j$ and the
  objective is to minimize the makespan, that is, the maximum sum of processing times on a single
  machine. Since an exact solution cannot be computed unless $\mathrm P = \mathrm{NP}$, we
  are satisfied with a $(1+\epsilon)$-approximation, where $\epsilon > 0$ is part of the input.
  We will outline how this problem can be solved using our algorithm.
  This gives the best known running time, which is even a slight improvement over
  the sophisticated algorithm for this problem in~\cite{DBLP:journals/mor/JansenKV20}.

  We consider here the variant, in which a makespan $\tau$ is given and we have to find
  a schedule with makespan at most $(1 + \epsilon)\tau$ or prove that there exists no schedule
  with makespan at most $\tau$. This suffices by using a standard dual approximation framework.
It is easy to see that one can discard all jobs of size at most $\epsilon \cdot \tau$ and add them
greedily after a solution for the other jobs is found. The big jobs can each be rounded
to the next value of the form $\epsilon\cdot \tau \cdot (1 + \epsilon)^i$ for some $i$.
This reduces the number of different processing times to $O(1/\epsilon\log(1/\epsilon))$ many and
increases the makespan by at most a factor of $1 + \epsilon$.
We are now ready to write this problem as an ILP.
A configuration is a way to use a machine. It describes how many jobs of each size are assigned
to this machine. Since we aim for a makespan of $(1 + \epsilon)\cdot \tau$, the sum of these
sizes must not exceed this value. The configuration ILP has a variable for every valid configuration
and it describes how many machines use this configuration.
Let $\mathcal C$ be the set of valid configurations and $C_k$ the multiplicity of size $k$
in a configuration $C\in\mathcal C$.
The following ILP solves the rounded instance. We note that there is no objective function in it.
\begin{align*}
  \sum_{C\in\mathcal C} x_{C} &= M \\
  \sum_{C\in\mathcal C} C_k \cdot x_{C} &= N_k & \forall k\in\mathcal K \\
  x_{C} &\in \mathbb Z_{\ge 0} & \forall C\in\mathcal C
\end{align*}
Here $\mathcal K$ are the rounded sizes and $N_k$ the number of jobs with rounded size $k\in\mathcal K$.
The first constraint enforces that the correct number of machines is used, the next $|\mathcal K|$ many
enforce that for each size the correct number of jobs is scheduled.

It is notable that this ILP has only few constraints (a constant for a fixed choice of $\epsilon$) and also
the $\ell_1$-norm of each column is small. More precisely, it is at most $1/\epsilon$, since every size is at least $\epsilon \cdot \tau$ and therefore no more than $1/\epsilon$ jobs fit in one configuration.
  By the Theorem~\ref{beck-fiala} we know that $H = 1/\epsilon$ is an upper bound on the hereditary discrepancy, $\Delta \le 1/\epsilon$,
  $m = O(1/\epsilon\log(1/\epsilon))$, $\lVert b \rVert_\infty \le N$, and $n\le (1/\epsilon)^{O(1/\epsilon\log(1/\epsilon))}$. Notice also that $K = N$ is
  a trivial upper bound on the $\ell_1$-norm of any solution.
  Using~(\ref{eq:rt-feas}) and rounding in time $O(N + 1/\epsilon \log(1/\epsilon))$
  yields a running time of
  \begin{multline*}
    O(H)^m \log(\Delta)\log(K) + O(nm) + 
    O\left(N + \frac{1}{\epsilon}\log\left(\frac{1}{\epsilon}\right)\right) \\
    \le 2^{O(1/\epsilon\log^2(1/\epsilon))}\log(N) + O\left(N + \frac{1}{\epsilon}\log\left(\frac{1}{\epsilon}\right)\right)
    \le 2^{O(1/\epsilon\log^2(1/\epsilon))} + O(N) .
  \end{multline*}
  The inequality above follows from distinguishing between
  $2^{O(1/\epsilon\log^2(1/\epsilon))} \le \log(N)$ and
  $2^{O(1/\epsilon\log^2(1/\epsilon))} > \log(N)$.
  The same running time (except for a higher constant in the exponent)
  could be obtained with~\cite{DBLP:journals/talg/EisenbrandW20}.
  However, in order to avoid a multiplicative factor of $N$,
  one would have to solve the LP relaxation first and then use proximity.
  Our approach gives an easier, purely combinatorial algorithm.
  The advantage of our algorithm comes from removing the dependence on $\lVert b \rVert_\infty$. Recently, the authors together with Berndt and Deppert~\cite{berndt2022load}
  introduced a more involved ILP for this problem, which reduces the $\ell_1$-norm
  of each column to $O(\log(1/\epsilon))$ while maintaining the other bounds.
  Still using the algorithm from this work, this leads to a mild improvement of
  the running time to
  \begin{equation*}
      2^{O(1/\epsilon \log(1/\epsilon)\log\log(1/\epsilon))} + O(N) .  
  \end{equation*}
  This improvement relies on the low hereditary discrepancy and does not follow with
  the weaker bounds on the Steinitz Lemma as in~\cite{DBLP:journals/talg/EisenbrandW20}.


\bibliographystyle{plain} 
\bibliography{main.bib} 


\appendix

\section*{Omitted proofs}

\paragraph*{Splitting a solution into two even parts.}
Recall that Lemma~\ref{lem-split} says that a vector $x$ can be split into two
parts $z$ and $x - z$ such that $Az, A(x - z) \approx 1/2 \cdot Ax$ and the
$\ell_1$-norm of each part is at least a constant fraction of that of $x$.
\begin{proof}{Proof of Lemma~\ref{lem-split}}
Let $x'_i = \lfloor x_i / 2 \rfloor$ and
$x''_i = \lceil x_i / 2 \rceil - \lfloor x_i / 2 \rfloor \in \{0, 1\}$ for all $i$.
Then $x_i = \lfloor x_i/2 \rfloor + \lceil x_i/2 \rceil = 2x_i' + x_i''$.
Now apply the definition of $\mathrm{disc}(A_I)$ to $x''$, where $I=\mathrm{supp}(x'')$ are the
indices $i$ with $x''_i = 1$.
This way we obtain a vector $z''\in\{0, 1\}^n$ with
$\lVert A (z'' - x''/2) \rVert_\infty \le \mathrm{disc}(A_I) \le \mathrm{herdisc}(A)$.
We now use $z = x' + z''$ to show the first part of the lemma.
Then
\begin{equation*}
  \left\lVert A \left(z - \frac x 2\right) \right\rVert_\infty
  = \left\lVert A \left((x' + z'') - \frac {2x' + x''} 2\right) \right\rVert_\infty 
  = \left\lVert A \left(z'' - \frac {x''} 2\right) \right\rVert_\infty \le \mathrm{herdisc}(A).
\end{equation*}
Furthermore, for all $i$
\begin{equation*}
  0 \le \underbrace{x'_i + z''_i}_{=z_i} \le x'_i + x''_i \le 2x'_i + x''_i = x_i .
\end{equation*}
In order to control the $\ell_1$ norm in the second part of the lemma,
we first split $x$ into two non-empty
$y', y''\in\mathbb Z_{\ge 0}^n$ with $y' + y'' = x$ and
$\lfloor \lVert x \rVert_1 / 2\rfloor = \lVert y' \rVert_1 \le \lVert y'' \rVert_1 = \lceil \lVert x \rVert_1 / 2 \rceil$.
Now apply the first part of the lemma to obtain $z' \le y'$ and $z'' \le y''$
with $\lVert A(z' - y'/2) \rVert_\infty \le \mathrm{herdisc}(A)$ and
$\lVert A(z'' - y''/2) \rVert_\infty \le \mathrm{herdisc}(A)$.
We can assume w.l.o.g. that $\lVert z' \rVert_1 \le \lVert y' \rVert_1 / 2 \le \lVert x \rVert_1 / 4$ and
$\lVert z'' \rVert_1 \ge \lVert y'' \rVert_1 / 2 \ge \lVert x \rVert_1 / 4$, since otherwise we can
swap them for $y' - z'$ and $y'' - z''$, respectively.
We will use $z = z' + z''$ for the second part of the lemma.
As for the lower bound,
\begin{equation*}
  \lVert z \rVert_1
  \ge \lVert z'' \rVert_1 \ge \frac{\lVert x \rVert_1} 4 .
\end{equation*}
For the upper bound we first consider
the case where $\lVert x \rVert_1 \le 5$
and note that $\lVert z' \rVert_1 \le \lVert y' \rVert_1 / 2 = \lVert y' \rVert_1 - \lVert y' \rVert_1 / 2 < \lVert y' \rVert_1$.
Thus,
\begin{equation*}
  \lVert z \rVert_1 = \lVert z' + z'' \rVert_1 \le \lVert y' + y'' \rVert_1 - 1
  \le \lVert x \rVert_1 - \frac 1 5  \lVert x \rVert_1 = \frac 4 5 \lVert x \rVert_1 .
\end{equation*}
If $\lVert x \rVert_1 \ge 6$,
\begin{equation*}
  \lVert z \rVert_1 = \lVert z' + z'' \rVert_1
  \le \frac{\lVert x \rVert_1}{4} + \lVert y'' \rVert_1
  = \frac{\lVert x \rVert_1}{4} + \left\lceil\frac{\lVert x \rVert_1}{2}\right\rceil 
  \le \frac{\lVert x \rVert_1}{2} + \frac{\lVert x \rVert_1}{4} + \frac 1 2
  \le \frac 3 4 \lVert x \rVert_1 + \frac 1 {12} \lVert x \rVert_1
  \le \frac 5 6 \lVert x \rVert_1 .
\end{equation*}
Finally, $z_i = z'_i + z''_i \le y'_i + y''_i = x_i$ and
\begin{multline*}
  \left\lVert A \left(z - \frac x 2\right) \right\rVert_\infty
= \left\lVert A \left((z' + z'') - \frac{y' + y''} 2\right) \right\rVert_\infty \\
  \le \left\lVert A \left(z' - \frac{y'}{2}\right) \right\rVert_\infty + \left\lVert A \left(z'' - \frac{y''}{2}\right) \right\rVert_\infty
\le 2 \cdot \mathrm{herdisc}(A) . \qedhere
\end{multline*}
  \end{proof}
  \paragraph*{Computing the dynamic table using convolution.}
  In the following we explain the details on how to reduce the computation of the entries
  of the dynamic table to a $1$-dimensional convolution. We first need to handle
  that $2^{i-1-k}b$ might not be integral. Let $b^0 = \lfloor 2^{i-1-k}b \rfloor$ denote the
     vector rounded down in every component. Then $D_{i-1}$ is completely covered by
     the points with $\ell_\infty$-distance $4H+2$ from $b^0$. Likewise, $D_{i}$ is covered by the points with
     distance $4H+2$ from $2b^0$.

%
     We project a vector $b' \in D_{i-1}$ to
     \begin{equation}
       f_{i-1}(b') = \sum_{j=1}^{m} (16H + 11)^{j-1} \underbrace{(4H + 3 + b'_j - b^0)}_{\in \{1,\dotsc,8H + 5\}} . \label{sum-convolution}
     \end{equation}
     Notice that $16H + 11$ is always bigger than the sum of two values 
     of the form $4H + 3 + b'_j - b^0$.
     We define $f_{i}(b')$ for all $b'\in D_{i}$ in the same way, except we substitute $b^0$ for $2b^0$.
     For all $a,a'\in D_{i-1}, b'\in D_i$, it
     holds that $f_{i-1}(a) + f_{i-1}(a') = f_{i}(b')$, if and only if $a + a' = b' - (4H + 3,\dotsc,4H + 3)^T$:
     \paragraph*{Implication~$\Rightarrow$.}
       Let $f_{i-1}(a) + f_{i-1}(a') = f_{i}(b')$. Then, in particular,
       \begin{equation*}
  f_{i-1}(a) + f_{i-1}(a') \equiv f_{i}(b') \mod 16H + 11 
       \end{equation*}
       Since all but the first element of the sum (\ref{sum-convolution}) are multiples
       of $16H + 11$, i.e., they are equal $0$ modulo $16H + 11$, we can omit them
       in the equation. Hence,
       \begin{equation*}
  (4H + 3 + a_1 - b^0_1) + (4H + 3 + a'_1 - b^0_1)
  \equiv (4H + 3 + b'_1 - 2 b^0_1) \mod 16H + 11 .
       \end{equation*}
       We even have equality (without modulo) here, because both sides are smaller than
       $16m\Delta + 11$. Simplifying the equation gives
       $a_1  + a'_1 = b'_1 - (4H + 3)$.
       Now consider again the equation $f_{i-1}(a) + f_{i-1}(a') = f_{i}(b')$. In the sums
       leave out the first element. The equation still holds, since by the elaboration above this
       changes the left and right hand-side by the same value. We can now repeat the same argument
       to obtain $a_2  + a'_2 = b'_2 - (4H + 3)$ and the same for all other dimensions.
     \paragraph*{Implication~$\Leftarrow$.}
       Let $a + a' = b' - (4H + 3,\dotsc,4H + 3)^T$. Then for every $j$,
\begin{equation*}
  (4H + 3 + a_j - b^0_j) + (4H + 3 + a'_j - b^0_j)
  = 4H + 3 + b'_j - 2 b^0_j .
\end{equation*}
       It directly follows that $f_{i-1}(a) + f_{i-1}(a') = f_{i}(b')$.

     This means when we write the value of each $b''\in D_{i-1}$ to $r_j$ and $s_j$, where
     $j = f_{i-1}(b'')$ and every entry not used is set to $-\infty$,
     the correct solutions will be in $t$.
     More precisely, we can read the result for some $b'\in D_i$ at $t_j$ where
     $j = f_i(b' + (4H + 3,\dotsc,4H + 3)^T)$.

\end{document}